\DeclareMathOperator*{\argmin}{arg\,min}
\setlist{  
  listparindent=\parindent,
  parsep=0pt,
}
\newif\iftechreport
\newif\ifsubmission
\newif\ifrevision
\definecolor{maroon}{rgb}{0.5, 0.0, 0.0}
\newcommand{\newchanges}[1]{\textcolor{blue}{#1}}
\newcommand{\reviewone}[1]{\textcolor{red}{#1}}
\newcommand{\reviewtwo}[1]{\textcolor{blue}{#1}}
\newcommand{\reviewthree}[1]{\textcolor{ForestGreen}{#1}}
\newcommand{\onethree}[1]{\textcolor{Purple}{#1}}
\newcommand{\newchanges}[1]{#1}
\newcommand{\reviewone}[1]{#1}
\newcommand{\reviewtwo}[1]{#1}
\newcommand{\reviewthree}[1]{#1}
\newcommand{\onethree}[1]{#1}
\long\def\comment#1{}
\renewcommand{\Re}{\mathbb{R}}
\newcommand{\remove}[1]{}
\def\mparagraph#1{\par\medskip\noindent{\textbf{#1.}}\quad\parindent 1.5em}
\newcommand{\bucket}{b}
\title{Combining Aggregation and Sampling (Nearly) Optimally for Approximate Query Processing}
\author{Xi Liang}
\affiliation{%
  \institution{University of Chicago}
}
\email{xiliang@uchicago.edu}
\author{Stavros Sintos}
\affiliation{%
  \institution{University of Chicago}
}
\email{sintos@uchicago.edu}
\author{Zechao Shang}
\affiliation{%
  \institution{Snowflake Computing}
}
\email{zechao.shang@snowflake.com}
\author{Sanjay Krishnan}
\affiliation{%
  \institution{University of Chicago}
}
\email{skr@cs.uchicago.edu}
\keywords{approximate query processing; data sketching; aggregate estimation}
    \renewcommand\footnotetextcopyrightpermission[1]{} 
\begin{document}

\iftechreport
\pagestyle{plain} 
\else
\fancyhead{}
\fi

\begin{abstract}
Sample-based approximate query processing (AQP) suffers from many pitfalls such as the inability to answer very selective queries and unreliable confidence intervals when sample sizes are small.
Recent research presented an intriguing solution of combining materialized, pre-computed aggregates with sampling for accurate and more reliable AQP.
We explore this solution in detail in this work and propose an AQP physical design called PASS, or Precomputation-Assisted Stratified Sampling.
PASS builds a tree of partial aggregates that cover different partitions of the dataset.
The leaf nodes of this tree form the strata for stratified samples.
Aggregate queries whose predicates align with the partitions (or unions of partitions) are exactly answered with a depth-first search, and any partial overlaps are approximated with the stratified samples.
We propose an algorithm for optimally partitioning the data into such a data structure with various practical approximation techniques.
\end{abstract}

\maketitle

\setcounter{section}{0}

\section{Introduction}
There are a number of applications where exact query results are unnecessary.
For example, visualizations only require precision up to screen and human perceptual resolutions~\cite{kamat2014distributed, liu2013immens}.
Similarly, in exploratory data analysis where users may be only looking for broad trends, exact numerical results are not needed~\cite{zgraggen2016progressive}.
\reviewone{In industrial workloads, such queries are not only highly prevalent, but they are also highly resource-intensive: Agarwal et al. found that roughly 30\% of a Facebook workload consists of aggregate queries over tables larger than 1TB~\cite{agarwal2013blinkdb_}. }
Examples such as these have motivated nearly 40 years of approximate query processing (AQP) research, where a database deliberately sacrifices accuracy for faster~\cite{cormode2011synopses, kandula2016quickr} or more resource-efficient results~\cite{krishnan2015stale} for aggregate queries.

Data sampling has been the primary approach used in AQP since the research area's conception~\cite{olken1986simple}. To this day, there are new results in sampling techniques~\cite{walenz2019learning}, implementation~\cite{kandula2019experiences}, and mechanisms~\cite{zhao2018random}. The perennial research interest in data sampling stems from its generality as an approximation technique and the extensive body of literature in sampling statistics to quantify the error rate in such approximations.
However, small samples may not contain the relevant data for highly selective queries, and this can lead to confusing or misleading results.
Recent work mitigates this problem by using an anticipated query workload to prioritize sampling certain regions of the database to support such selective queries (called stratified sampling)~\cite{agarwal2013blinkdb, chaudhuri2007optimized}, or construct samples online during query execution~\cite{kandula2016quickr, walenz2019learning}.

Consequently, pure uniform sampling is rarely used on its own, and most practical sampling systems leverage workload information to ensure that the system can reliably answer selective queries~\cite{acharya2000congressional, chaudhuri2007optimized, agarwal2013blinkdb, babcock2003dynamic, chaudhuri2001overcoming, chaudhuri2001robust, ganti2000icicles, ding2016sample+}.
These systems often run offline optimization routines to materialize optimal samples to answer future queries.
If we can tolerate expensive, up-front sample materialization costs, there is a natural question of whether it is also valuable to expend resources to compute helper ``full dataset'' query results.
There have been a number of different proposals that do exactly this~\cite{jermaine2003robust, galakatos2017revisiting, peng2018aqp++, wang2014sample, krishnan2015stale}; where systems leverage pre-computed exact aggregates to help estimate the result of a future query.
Not too far off from such proposals are a number of recent approaches to use machine learning for query result estimation~\cite{hilprecht2019deepdb,yang2019deep}.

Despite the handful of research papers on the subject, we find that the theory on how to best leverage both pre-computed aggregates and sampling is limited. 
Existing work often assumes one of the sides is fixed: Galakatos et al. optimize sampling given that they have cached previously computed query exact results~\cite{galakatos2017revisiting}, or Peng et al. optimize aggregate selection given a uniform sample of data~\cite{peng2018aqp++}.
Such piecewise optimization results in an incomplete understanding of the worst-case error of the data structure.
Systems have to balance a number of complex, interlinked factors: (1) precomputation/optimization time, (2) storage space, (3) query latency, and (4) query accuracy.
Tuning such structures for a desired accuracy SLO can be significantly challenging for a user.

\begin{figure}[t]
    \centering
    \includegraphics[width=\columnwidth]{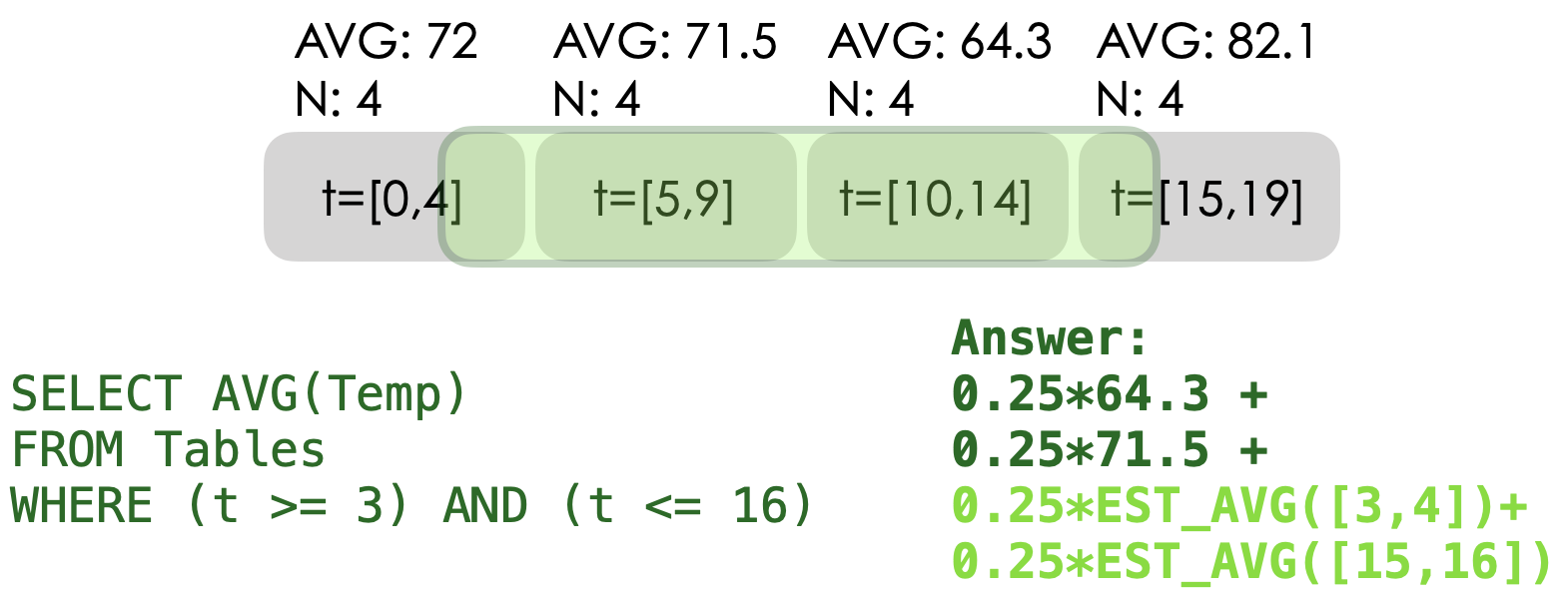}
    \caption{Temperatures collected over 20 time-steps and aggregated into 4 partitions. Partitioned aggregates can be used to decompose SUM/COUNT/AVG queries into an exact and approximate components. }
    \label{fig:mot}
\end{figure}

The joint optimization, over both sampling \emph{and} precomputation, is complex because one has to optimize over a combinatorial space of SQL aggregate queries, while accounting for the real-valued effects of sampling.
\emph{The core insight of this paper is to formalize a connection between pre-computed aggregates and stratified sampling.}
We interpret pre-computed aggregates as a sort of index that can guide sampling rather than a simple materialized cache of query results.
Figure \ref{fig:mot} illustrates a motivating example, where temperature readings over 20 time-steps are aggregated into 4 partitions.
Any new AVG query over a time range can be decomposed into two parts: an exact part where the range fully covers intersecting partitions, and an approximate part where the range partially covers a partition.
Thus, we will show a hierarchy of partitioned aggregates that can act as an efficient sample selector---determining which stratified samples of data are relevant for answering a query.
This formulation leads to a simple formula extension of stratified sampling variance, and a partitioning optimization objective that can control for the worst-case query result error from the synopsis.

We propose a new AQP data structure called PASS, or Precomputation Assisted Stratified Sampling (illustrated in Figure \ref{fig:arch}).
PASS is given a precomputation time budget and a query latency constraint, and it generates a synopsis data structure constructed of both samples and aggregates.
The more work that PASS is allowed to do upfront, the more accurate future queries are.
PASS first generates a hierarchical partitioning of the dataset (a tree).
For each partition (nodes in the tree), we calculate the SUM, COUNT, MIN, and MAX values of the partition.
Associated with the leaf nodes is a uniform sample of data from that partition (effectively a stratified sample over the leaves).
The tree-like structure acts as an index, allowing us to efficiently skip irrelevant or inconsequential partitions to the query results.
Crucially, this lends to an analytic form for query result variance for SUM, COUNT, and AVG queries with predicates.

\begin{figure}[t]
    \centering
    \includegraphics[width=\columnwidth]{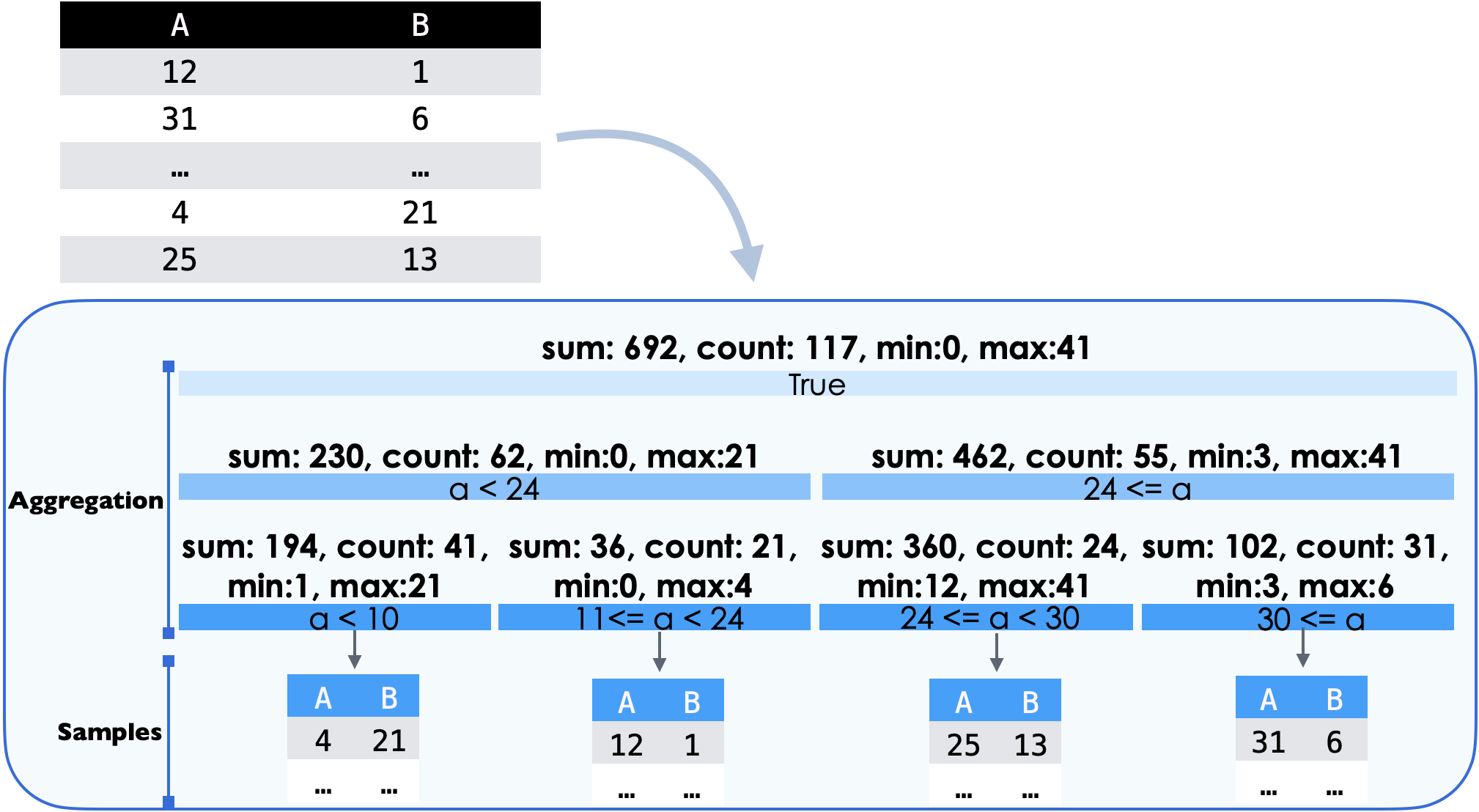}
    \caption{PASS summarizes a dataset with a tree of aggregates at different levels of resolution (granularity of partitioning). Associated with the leaf nodes are stratified samples. We present an algorithm to optimize over such a structure for fast and accurate approximate query processing.}
    \label{fig:arch}
\end{figure}

PASS gives the end user a stronger guarantee about the worst-case end-to-end accuracy than any recent ``hybrid'' AQP work~\cite{galakatos2017revisiting, peng2018aqp++, park2017database, hilprecht2019deepdb}.
In practice, we find that the data structure is often empirically beneficial compared to alternatives.
PASS is more general than the work proposed in Gan et al. and supports a wider set of queries~\cite{gan2020coopstore}.
We also find that PASS is empirically more accurate than AQP++~\cite{peng2018aqp++}, and can provably scale to much larger pre-computed sets.
A secondary benefit of PASS is that aggregate hierarchy can compute worst-case estimation error (a 100\% confidence interval) for common queries since we know the true extrema and the true cardinality of each partition (similar to~\cite{lazaridis2001progressive}).
To the best of our knowledge, no other commonly used sample-based data structure offers this benefit.

Of course, the data structure is only as good as its optimization objective. We note that PASS is sensitive to the expected workload and is more expensive to construct. 
However, we contend that PASS is a step towards a comprehensive understanding of how sampling and precomputation can be combined for fast and accurate AQP and find that if we control for query latency and the amount of precomputation, our results are generally more accurate.
In summary, we contribute:
\begin{enumerate}
    \item A new data structure for AQP called PASS which supports SUM, COUNT, AVG, MIN, and MAX queries with predicates.
    \item An optimization algorithm to generate the structure from real data that finds a partitioning that minimizes the maximum sampling variance of a set of possible expected queries.
    \item Experiments that show that PASS is often more accurate than uniform sampling, stratified sampling, and AQP++.
\end{enumerate}

\section{Background and Problem Setup}
\label{sec:background}
We start with a simplified model: a ``one-dimensional'' approximate query processing problem.
Consider a collection of $N$ tuples representing numerical data $P = \{(c_i, a_i)\}_{i=1}^N$,
where one collects numerical value measurements $a_i$ and attributes about those measurements $c_i$ (e.g., a description about what the measurement represents); generically denoted as $A$ and $C$ respectively, when we are not interested in a particular tuple.

For example, one could have a dataset of times (attributes) and temperature measurements (values):
\begin{lstlisting}[basicstyle=\small\tt]
(00:01, 70.1C), (00:02, 70.4C),..., (15:53, 69.9C)
\end{lstlisting}
Over such a collection of data, we would like to be able to answer the following ``subpopulation-aggregate'' queries: SUM, COUNT, and AVG aggregations of the numeric measurements $A$ over subpopulations determined by filters (predicates) over the $C$.

However, in approximate query processing, we would like a sub-linear time (preferably constant time) answer with a tolerable approximation error.
This problem is fundamentally a data structure question, namely, how to summarize the collection $P$ into a synopsis that can approximately answer the desired queries.
Leading to the following more precise problem statement: \emph{derive a data structure from the collection $P$ that occupies no more than $\Tilde{\mathcal{O}}(K)$ space and can answer any subpopulation-aggregate query with approximation error provable guarantees in $\Tilde{\mathcal{O}}(K)$ time where $K$ is a user-defined parameter much less than $N$.}

\subsection{Uniform Sampling}
The simplest such data structure is a uniform sample.
From $P$, we can first sample a subset $S$ of size $K$ uniformly; that is, every tuple is sampled with equal probability. Such uniform samples of numbers have the property that the averages within the sample approximate (tend towards with bounded error) the average of the population from which the sample is derived.
So, we can approximate our desired queries by first re-formulating them as different average-value calculations.
We first define some notation:
\begin{itemize}\vspace{-.5em}
\item $t$ a tuple $(c,a)$.
\item $f(\cdot)$: a function representing any of the supported aggregates.
\item \textsf{Predicate}($t$): the predicate of the aggregate query, where \textsf{Predicate}($t$) = 1 or 0 denotes $t$ satisfies or dissatisfies the predicate, respectively.
\item $K$: the number of tuples in the sample.
\item $K_{pred}$: the number of tuples that satisfy the predicate in the sample.
\item $a$: the numerical value. 
\end{itemize}
We can reformulate SUM, COUNT, and AVG queries as calculating an average over transformed attributes:
\begin{equation}\label{eq:general-func}\small
f(S) = \frac{1}{K} \sum_{t \in S} \phi(t)
\end{equation}
where $\phi(\cdot)$ expresses all of the necessary scaling to translate the query into an average value calculation:
\begin{itemize}
\item COUNT: $\phi(t) = \textsf{Predicate}(t) \cdot N$
\item SUM: ~\, $\phi(t) = \textsf{Predicate}(t) \cdot N \cdot a$
\item AVG: ~\, $\phi(t) = \textsf{Predicate}(t) \cdot \frac{K}{K_{pred}}  \cdot a $ 
\end{itemize}

\vspace{0.5em}
In order to represent $AVG(S)$ in the form of Equation~\ref{eq:general-func}, we rewrite it to the following equivalent Equation:  
\begin{equation}
AVG(S) = \frac{1}{K}  \sum_{t \in S} \textsf{Predicate}(t) \cdot \frac{K}{K_{pred}}  \cdot a.
\end{equation}
Therefore, we have $\phi(t) = \textsf{Predicate}(t) \cdot \frac{K}{K_{pred}}  \cdot a $ for the AVG query.

\subsubsection{Error Rate}\label{subsec:resultestimation}
Let us denote the $\phi(P)$ and $\phi(S)$ as taking the transformation functions above and applying them to every tuple in $P$ or $S$ respectively.
The Central Limit Theorem (CLT) states that these empirical mean values tend towards a normal distribution centered around the population mean:
\begin{equation}\small
N(mean(\phi(P)),\frac{var(\phi(P))}{K})
\end{equation}
Since the estimate is normally distributed, we can define a confidence interval parametrized by $\lambda$ (e.g., 95\% indicates $\lambda=1.96$)\footnote{\scriptsize When estimating means of finite population there is a finite population correction factor of $FPC=\frac{N-K}{N-1}$ which scales the confidence interval.}.
\begin{equation}\small
mean(\phi(S)) \pm \lambda \sqrt{\frac{var(\phi(S))}{K}}.
\end{equation}
To understand the main pitfall of uniform sampling, notice the main scaling factor in the AVG queries is $\frac{K}{K_{pred}}$. The more selective a query is (i.e., smaller $K_{pred}$), the smaller the effective sample size is. 
If your sampling rate is $10\%$ but your predicate matches with only $1\%$ of the tuples in a database, then your effective sample size for that query is $0.1\%$! 

Not only do selective queries increase the error in your result estimates, but they also make the confidence interval estimates less reliable. Accurately estimating the variance from a very small sample is often harder than estimating the result itself since variance is a measure of spread. Furthermore, the CLT holds asymptotically and is naturally less reliable at small sample sizes.

\subsection{Stratified Sampling}\label{sec:ssamp}
Stratified sampling is one way to mitigate the effects of selective predicates.
Instead of directly sampling from $P$, we first partition $P$ into $B$ strata, which are mutually exclusive partitions defined by groupings over $C$.
Within each stratum $P_1,...,P_B$, we uniformly sample as before resulting in samples $S_1,...,S_B$.
So, instead of a single parameter $K$ which controlled the accuracy in the uniform sampling case, we have a $K_1,...,K_B$ for each stratum. The sum of all $K_i$ can be equated to the uniform sampling size to compare efficiencies $K = \sum_{i=1}^B K_i$.

The results estimation scheme in the previous section can be applied to each of the strata treating it as a full dataset.
We combine the estimates with a simple weighted average:
\[
\sum_{i=1}^B est(S_i) \cdot w_i
\]
For SUM/COUNT $w_i=1$. For AVG $w_i=\frac{N_i}{N_q}$ in strata with at least one relevant tuple to the query and $0$ otherwise; where $N_i$ is the total number of tuples in the strata, and $N_q$ is the total number of tuples in all relevant strata.
Using the algebraic properties of variance, the confidence interval can be calculated as follows:
\[
\pm \lambda \cdot \sqrt{\sum_{i=1}^B  w_i^2 \cdot V_i(q)} 
\]
where $V_i(q)$ is $\frac{var(\phi(S_i))}{K_i}$. 
Stratified sampling is really powerful when the strata correlate with predicates the user may issue. 
The variance $V_i$ within the strata might be much smaller than the variance globally.

\subsection{Stratified Aggregation}\label{sec:sagg}
Like we saw in the example in the introduction, partitioned aggregations can be used to approximate a query result.
Suppose, as in stratified sampling, we first partition $P$ into $B$ mutually exclusive partitions.
But instead of sampling from these partitions $P_1,...,P_B$, we compute the SUM, MAX, MIN, and COUNT for each partition~\footnote{For technical reasons, we assume all $a$ are positive (they can be shifted if not)}.
This data structure, a collection of partitioned aggregates, has $\mathcal{O}(B)$ values. 
We can use this data structure
to estimate query results for our desired subpopulation-aggregate queries.

For any predicate, there are three different sets of partitions:
\begin{itemize}\vspace{-.5em}
\item $R_{cover}$ : it is known that every tuple in the partition satisfies the predicate
\item $R_{partial}$ : it is possible some tuple in the partition satisfies the predicate
\item $R_{none}$ : no tuple in the partition satisfies the predicate
\end{itemize}
Since each partition $P_i$ has a $SUM(P_i)$, $MAX(P_i)$, $MIN(P_i)$, and $COUNT(P_i)$,  can use these sets to estimate the maximum possible and minimum possible value the aggregate query of interest could take.
For SUM and COUNT queries, this is easy due to their monotonic nature. We simply fully include the partial partitions in the upper bound, and omit them for the lower bound:
\[
ub = \sum_{P_i \in R_{cover}} AGG(P_i) + \sum_{P \in R_{partial}} AGG(P_i)
\]
\[
lb = \sum_{P_i \in R_{cover}} AGG(P_i)
\]
AVG queries are a little more complex to estimate since they are not monotonic. Let's define $MAX(R_{partial})$ to be the maximum of all of the max values of the partitions with partial overlap, and $MIN(R_{partial})$ to similarly be the minimum value.
A bound for the AVG query is:
\[
ub = \max\{\frac{\sum_{P_i \in R_{cover}} SUM(P_i)}{\sum_{P_i \in R_{cover}} COUNT(P)}~,~ MAX(R_{partial})\}
\]
\[
lb = \min\{\frac{\sum_{P_i \in R_{cover}} SUM(P_i)}{\sum_{P_i \in R_{cover}} COUNT(P)}~,~ MIN(R_{partial})\}
\]
The average is at most the maximum of the average of fully covered partitions and the overall max of any potentially relevant partitions (and likewise for the lower bound). This scheme is fully deterministic and it is always guaranteed that the user's result lies within those confidence intervals.

We can characterize the estimation error as $ub-lb$, and notice that in all three queries the error is a function of $R_{partial}$.
If a query predicate ``aligns'' with the partitioning (no partial overlaps), the query is answered exactly with 0 error.
This property is not guaranteed with stratified sampling, which will always have sampling error in its estimates.
However, the partial overlaps introduce ambiguity and error since we do not know how many relevant tuples match the predicate in those partitions.
In those partial overlap cases, sampling is a far more accurate estimate because the deterministic bounds are very pessimistic.


\subsection{Related Work}
Uniform sampling, stratified sampling, and stratified aggregation have dominated the AQP literature dating back to the 1980s~\cite{olken1986simple}, and we refer the readers to recent taxonomy and critique of this work~\cite{chaudhuri2017approximate}.

\noindent \textbf{Optimizing Sampling. } The pitfalls of uniform sampling are well-established, and several approaches have been proposed to optimize sampling~\cite{acharya2000congressional, chaudhuri2007optimized, agarwal2013blinkdb, babcock2003dynamic, chaudhuri2001overcoming, chaudhuri2001robust, ganti2000icicles, ding2016sample+}.
Almost all of this work relies on significant prior knowledge before query time.
Either they leverage prior knowledge of a workload~\cite{agarwal2013blinkdb, babcock2003dynamic, acharya2000congressional} or rely on auxiliary index structures~\cite{chaudhuri2001overcoming, ganti2000icicles, ding2016sample+}.
The consequence is a substantial offline optimization component that takes at least one full pass through the dataset in these AQP systems.
The proposal in this paper, PASS, is similar in that it constructs a synopsis data structure offline for accurate future query processing.
While there are AQP settings where samples are constructed online or during query processing~\cite{kandula2016quickr, hellerstein1997online, krishnan2015stale}, we believe there are a large number of data warehousing use cases where expensive upfront creation costs can be tolerated.
\reviewthree{VerdictDB~\cite{park2018verdictdb} is a recent AQP system that supports approximate query processing of general ad-hoc queries. It builds a new index \emph{scramble} by drawing samples from the original data. Given a query it uses only the sampled items in the scramble to estimate the result. They achieve fast latency with error provable guarantees. PASS also uses samples to construct a data structure in the pre-processing phase, however, it combines aggregation and stratified sampling to build a tree-based index with very low space complexity to answer aggregation queries efficiently with minimum error. VerdictDB uses more space to answer queries with high accuracy, however, it can handle more types of queries, like equi-joins. In Section~\ref{sec:exp} we compare PASS with VerdictDB over different datasets.}

\noindent \textbf{Optimizing Aggregation. } There are also similar studies of how to optimize ``binned aggregates''~\cite{jagadish1998optimal, koudas2000optimal, jagadish2001global}.  In particular, there is a highly related concept to pass of V-Optimal histograms, which are histogram buckets placed in such a way to minimize the cumulative variance~\cite{jagadish1998optimal}. In contrast, PASS is designed for cases where the goal is to aggregate one column of data based on predicates on another set of columns.  Accordingly, PASS constructs predicate partitions over the predicate columns to control the variance of the aggregation column. We further minimize the maximum variance (the worst-case error) unlike the V-Optimality condition. There are also multi-dimensional binned aggregation variants such as Lazaridis et al.~\cite{lazaridis2001progressive} (essentially a data cube for approximate query processing). While Lazardis et al. do not contribute a variance optimization, they do organize their aggregates in a hierarchical structure like PASS. 

\noindent \textbf{Hybrid AQP. } There are also a number of recent hybrid techniques that leverage precomputed ``full data'' aggregates to make sampling-based AQP more reliable. For example, Galakatos et al.~\cite{galakatos2017revisiting} cache previously computed results to augment previously constructed samples.
\reviewthree{In that way, they build an interactive scheme to handle ad-hoc queries efficiently.}
SampleClean materializes a full dataset aggregate over dirty data to mitigate sampling error~\cite{wang2014sample}.
\reviewthree{
AQP++~\cite{peng2018aqp++} precomputes a number of aggregate queries, determines query subsumption relationships to coarsely match a new query with one of those previously computed, and then, uses a uniform sample to approximate the gap. AQP++ runs a practical iterative hill-climbing heuristic to determine which aggregates to compute. We see AQP++ as the most similar proposal to PASS, but there are key differences in the two approaches. First, we propose an efficient dynamic programming algorithm with provable guarantees to find which aggregation queries (partitioning) to precompute so that the maximum error is minimized. We further organize these aggregates into a tree structure for efficient predicate evaluation.
Second, instead of using uniform sampling to approximate the gap, we apply stratified sampling only on the strata that are partially intersected by the query.}
Our experimental results find that PASS is generally more accurate for the same sample size.

\noindent \textbf{Mergeable Summaries and Partitioning. }
There is increasing discussion of data partitioning in AQP (outside of stratified sampling). Rong et al. define the PS$^3$ framework~\cite{rong2020approximate} to optimize sampling at a data partition level to avoid loading a large number of samples. We believe that the core tenets of the PS$^3$ framework are complementary to PASS and our optimization algorithm could be used as an inner routine in their framework. If our strata align with storage partitions, we could see similar benefits. We similarly see connections with Liang et al. who study constraint-based optimization for summarizing missing data~\cite{liang2020fast}. Hierarchical aggregation is also related to the work on mergeable summaries, which are synopses that can be exactly combined at different levels of granularity~\cite{gan2020coopstore, agarwal2012mergeable}.

\noindent \textbf{Learned AQP. } There are also a number of techniques that leverage machine learning for AQP. Some of the initial work in using precomputation for AQP uses Maximum Likelihood Estimates to extrapolate results to unseen queries~\cite{jermaine2003robust,jin2006new}. There also are more comprehensive solutions that train from a past query workload~\cite{park2017database} or directly build a probabilistic model of the entire database~\cite{hilprecht2019deepdb, yang2019deep}. There are also middle grounds that learn weights to direct stratified sampling
~\cite{walenz2019learning}.

\section{Overview and Query Processing}
\label{sec:ds_and_query_processing}
These strengths and weaknesses of stratified sampling and stratified aggregation suggest an intriguing middle ground.
In the simplest version, one can create stratified samples and annotate them with precomputed partition aggregates.
To answer a supported query, we can skip all strata that are fully covered and only use the samples to estimate those partially covered strata leading to our contribution, PASS: Precomputation-Assisted Stratified Sampling.

\subsection{Usage}\label{tuning}
PASS is a synopsis data structure used for answering aggregate queries over relational data.
The user defines an \emph{aggregation column} (numerical attribute to aggregate) and a \emph{set of predicate columns} (columns over which filters will be applied).
\reviewone{The system returns an optimized data structure that can answer SUM, COUNT, AVG, MIN, and MAX aggregates over the aggregation column filtered by the predicate columns.}
\begin{lstlisting}
SELECT SUM/COUNT/AVG/MIN/MAX(A)
FROM P
WHERE Predicate(C1,...,Cd)
\end{lstlisting}
Conceptually, this is the same problem setup as described in the previous section with a dataset of $(c, a)$ tuples; where the aggregation column corresponds to $a$ and the predicate columns correspond to $c$.
A PASS data-structure is one-dimensional when there is a single predicate column and is multi-dimensional when there are a set of predicate columns. 

\reviewone{
The user specifies the following parameters: ($\tau_c$) a time limit for constructing the data structure, and ($\tau_q$) a time limit for querying the data structure.
Then, using a cost-model, our framework minimizes the maximum query error while satisfying those constraints.}
Let $\mathcal{T}$ be the set of all PASS data structures that satisfy the above constraints, and $Q$ be the set of all relevant queries to the user. We define the following optimization problem:  
\begin{equation}\label{eq:main}
    T^*=\argmin_{T\in \mathcal{T}} \max_{q\in Q}error(q,T).
\end{equation}
The details of this optimization problem are described in the next section, but for simplicity, we will only consider tree structures with a fixed fanout and ``rectangular'' partitioning conditions $x_i\leq C_i\leq y_i$ for $1\leq i\leq d$.

\subsection{Partition Trees and Samples}\label{sec:part}
A preliminary concept to understanding PASS is an understanding of multi-resolution partitioning.
A \emph{partition} of a dataset $P$ is a decomposition of $P$ into disjoint parts $P_1,...,P_B$.
Each $P_i$ has an associated partitioning condition $\psi_i$, a predicate that when applied to the full dataset as a filter retrieves the full partition.
This definition is recursive as each partition is itself another dataset. 
Partitions can be further subdivided into even more partitions, which can then be subdivided further.
This type of recursive subdivision leads to the definition of a \emph{partition tree}.

\begin{definition}[Partition Tree]
Let $\{P_i\}_1^B$ be subsets of a dataset $P$. A partition tree is a tree with $B$ nodes (one corresponding to each subset) with the following invariants: (1) every child is contained in its parent, (2) all siblings are disjoint, and (3) the union of the siblings equals the parent.
\end{definition}

Given this definition, the root of this tree is necessarily the full dataset, which we can think of as a degenerate partitioning with the condition $\psi = True$.
Siblings' conditions can be combined together with a disjunction to derive the parent, and children can be derived with a conjunction with the parent's condition.
Thus, each layer of the tree completely spans the entire dataset, but is subdivided at finer and finer granularities.

For a target query predicate $q$ and a corresponding subset of tuples that satisfy $q$ denoted by $P(q)$, we can define the \emph{coverage frontier} or a minimal set of partitioning conditions that fully covers a query.
Let $\{P_i\}_{i=1}^B$ be nodes in a partition tree. A subset $P_1,...,P_l$ of these nodes \emph{covers} a predicate $q$ if $P(q) \subset \bigcup_{i=1}^l P_i.$
A covering subset is minimal if it is the smallest subset (in terms of the number of partitions) that covers $q$.

The nature of the invariants, where disjoint children completely span their parents, described above allows us to find such a subset of nodes efficiently.
Consider the following recursive algorithm:

\begin{algorithm}
\SetAlgoLined
\SetKwFunction{mcs}{\textsc{MCF}}
\mcs{$P_i, q$}:\\
\Indp   \textbf{if}$P_i \subseteq P(q)$ or $P_i$ is a leaf: \textbf{return} $\{P_i\}$\\
        \textbf{if}$P_i \cap P(q)=\emptyset$: \textbf{return} $\{\}$ \\
$\gamma = \{\}$\\
    \textbf{for} all children $P_{i}'$ of $P_i$: $\gamma = \gamma ~\cup$ \mcs{$P_i', q$}\\
 \KwRet{$\gamma$} 
 \caption{Minimal Coverage Frontier Algorithm}
\end{algorithm}

Note that there are two types of nodes returned by the MCF algorithm above.
Either we return \emph{leaf} nodes or we return nodes that are fully contained by the query predicate.
These two types of nodes exactly correspond to the two scenarios we described in Section \ref{sec:sagg}: partial coverage and total coverage.
The leaf nodes correspond to the partial overlap case.

Such a data structure gives us a practical algorithm to scale up stratified aggregation to a large number of nodes.
Instead of a tuple-wise containment test, the base case in line 2 can be evaluated from the partitioning conditions $\psi_i$ and a description of the query predicate.
Normally, for stratified aggregation, we would have to test each of the $B$ partitions.
However, a tree facilitates faster evaluation time for selective queries. 

Suppose, we have a partition tree of $B$ nodes where every parent has a fixed number of children. Let $q$ be a query that overlaps with $\gamma$ of the leaf nodes in the partition tree.
\reviewthree{
In the worst case, for computing MCF we need to visit $O(\gamma)$ nodes in each level of the partition tree.
In this setting, if the partition tree is balanced the time-complexity for computing the MCF is $\mathcal{O}(\gamma \log B)$.}
This result can be shown by noting that the number of overlaps with leaves bounds the number of relevant nodes in any of the layers (due to the invariants) and there are $\log B$ such layers.
For selective queries, this approach is far more efficient than a linear search through all partitions.

For each of the partitions in the partition tree, we compute four aggregate statistics over all the tuples within the partition: SUM, COUNT, MIN, MAX. Note, the AVG value is implicitly calculated with SUM and COUNT.
This data structure forms the backbone of PASS: a partition tree annotated with aggregate statistics (as seen in the top half of Figure \ref{fig:arch}).
The MCF algorithm returns those partitions that are completely covered, within which we can directly leverage the pre-computed aggregate, and those that are partially covered where we will need a more sophisticated estimation scheme.

\reviewthree{It is natural to compare PASS to existing data skipping frameworks that skip irrelevant partitions of data (e.g., ~\cite{sun2014fine}). However, it is worth noting that PASS further skips partitions that are completely covered by a query predicate. This is due to the composable structure of SUM/COUNT/AVG/MIN/MAX aggregate queries supported, where they can be safely computed from partial aggregates.}

\vspace{0.5em} \noindent \textbf{Sampling: }\label{sec:leafs}
The challenge with partial coverage is that we do not exactly know the selectivity of a query within the partition.
Thus, it makes sense to leverage sampling for this estimation.
Due to the partitioning invariants, partially covered nodes will only be leaf nodes and all retrieved nodes are disjoint.
We associate with each of the leaf nodes a uniform sample of tuples \emph{within that partition}.
This per-partition sampling plan differs from other proposals such as~\cite{peng2018aqp++}, which use uniform sampling.
The entire structure is summarized in Figure \ref{fig:arch}, where the partition tree of $B$ nodes lies above and stratified samples associated with each of leaf nodes lie below.

\subsection{Query Processing}\label{sec:qp}
PASS leads to the following query processing algorithm. We present SUM, COUNT, AVG for brevity, but it is also possible to get estimations for MIN and MAX.

\vspace{0.5em} \noindent  \textbf{Index Lookup. }  Apply the MCF algorithm above to retrieve two sets of partitions: $R_{cover}$ and $R_{partial}$.

 \vspace{0.5em} \noindent \textbf{Partial Aggregation. } For each partition in $R_{cover}$, we can compute an exact ``partial aggregate'' for the tuples in those partitions. For a SUM/COUNT query $q$: $agg =  \sum_{P_i \in R_{cover}} P_i(q),$
for an AVG query, we weight the average by the relative size of the partition: $agg =  \sum_{P_i \in R_{cover}} P_i(q) \frac{N_i}{N_q}$, where $N_i$ is the size of the partition $P_i$ and $N_q$ is the total size in all relevant partitions of query $q$.

 \vspace{0.5em} \noindent \textbf{Sample Estimation. } Each partition in $R_{partial}$ is a leaf node with an associated stratified sample. We use the formula in Section \ref{sec:ssamp} to estimate their contribution. Let $S_i$ denote the sample associated with partition $P_i \in R_{partial}$: $samp = \sum_{P_i \in R_{partial}} f(S_i) \cdot w_i$

 \vspace{0.5em} \noindent \textbf{Results. } The results can be found by taking a sum of the two parts: $result = samp + agg.$

\vspace{0.5em} \noindent  \textbf{Confidence Intervals. } Since the $agg$ result is fully deterministic, the only part that needs uncertainty quanitification is $samp$. We can use the formula in Section \ref{sec:ssamp} to compute this result:
\[
\pm \lambda \cdot \sqrt{\sum_{P_i \in R_{partial}} w_i^2 \cdot V_i(q)} 
\]
where $V_i(q)$ is $\frac{var(\phi(S_i))}{K_i}$ where $K_i$ is the sample size of the sample associated with partition $P_i$. 

 \vspace{0.5em}  \noindent \textbf{Hard Bounds. } The data structure allows us to compute deterministic hard bounds on query results using formulas in Sec. \ref{sec:sagg}. 

\subsection{Other Optimizations}
To summarize, PASS associates a hierarchy of aggregates with stratified samples. This allows us to create sampling plans with a large number of strata and efficiently skip irrelevant ones.
The hierarchy further allows us to bound estimates deterministically since the extrema of each stratum are known.
The design of PASS allows for a few important optimizations that can significantly improve performance in special cases.
First, the data structure has an important special case where we can skip processing the samples even when there is a partial overlap.

\vspace{0.5em}\noindent \textbf{0 Variance Rule: } For AVG queries, we can add an additional base case to the MCF algorithm: \emph{if the node in question has 0 variance (meaning that the min value is equal to the max value), return the current node. } When answering AVG queries, 0 variance nodes (where all numerical values are the same) are equivalent to covered nodes.

Next, the data structure can also effectively compress the samples using \emph{delta encoding}. Every sampled tuple can be expressed as a \emph{delta} from its partition average. Ideally, the variance within a partition would be smaller than the variance over the whole dataset.

\section{Optimizing the Partitioning}
\label{sec:OptPart}
This section describes how we optimize PASS to meet the desired error rate.

\subsection{Objective and Search Space}
The first step is to process the user-specified time constraints $\tau_c$ and $\tau_q$ into internal parameters.
As before, we consider a dataset $P$, where there is an aggregation column $A$, and a collection of predicate columns ($C_1,..,C_d$).
We calculate the maximum number of leaf nodes $k$ (which governs the construction time as we show later) allowable in the time limit $\tau_c$.
Each leaf will define rectangular partitioning condition $x_i\leq C_i\leq y_i$ for $1\leq i\leq d$.
Then, we calculate the maximum number of samples allowable in the time limit $\tau_q$.
The search space $\mathcal{T}$ is all PASS data structures with $k$ leaves
with a fixed fanout of $2^d$.

Next, we have to define a class of queries $Q$ that we care about.
While in general SQL predicates can be arbitrary, we restrict ourselves to a large class of ``sensible'' predicates.
Over this schema, we define Q to be AVG/SUM/COUNT queries in a ``rectangular region'', which returns the average value with respect to the attribute $A$ among all tuples with $x_i\leq C_i\leq y_i$ for $1\leq i\leq d$.
Our optimization framework can also support other definitions for $Q$ (hereafter called templates), but for brevity, we will focus on rectangular averages.

Given this definition of $Q$ and $\mathcal{T}$, now we describe the optimization objective.
Let $T=\{\bucket_1,\ldots, \bucket_k\}$ be the set of leaf nodes.
For a query $q\in Q$ let $T_q$ be the set of leaf nodes that partially intersect with the query predicate, we get an error formula as follows:
\[error(q,T)=\lambda\sqrt{\sum_{\bucket_i \in T_q} w_i^2 \cdot V_{i}(q) }.\]
Note from the previous section, the structure of the leaf nodes governs the estimation error of the data structure. The shape of the tree (height and fanout) only affects construction time and query latency. 
Thus, it is sufficient to optimize PASS in two steps to control for worst-case query error: first, choose an optimal partitioning of the leaf nodes, and then construct the full tree with a bottom-up aggregation.
The core of the algorithm is then to optimize the following ``flat'' partitioning:
\[R^*= \argmin_{ T \in \mathcal{T}_{leaves} } \max_{q\in Q} error(q,T),\]
where $\mathcal{T}_{leaves}$ is the family of all possible $k$ leaf nodes.

\subsection{Partitioning Algorithm Intuition}
It is worth noting that we do not have to search over all possible rectangular partitions and queries. Consider the 1D case (where the rectangles are simply intervals): all meaningful rectangular conditions will have predicate intervals defined by the attribute values of the tuples $c_i$ (any others are equivalent), and thus, \reviewtwo{there are 
${N \choose k-1}=\mathcal{O}(N^{k-1})$ possible partitions and $\mathcal{O}(N^2)$ possible query intervals.}

Thus, we make the following simplifying but practical assumptions. Note that the error in a query result is governed by the variance of items where there is partial overlap. 
So, we approximate the problem to the following search condition: control the variance of every query's single worst partial overlap.
Next, we further assume that when there is a partial overlap, this overlap is non-trivial where at least $\delta N$ tuples are relevant to the query (avoids degenerate results that could result in empty partitions).

\reviewtwo{Even in the 1D case, we want to avoid exhaustive enumeration of all the $\mathcal{O}(N^{k-1})$ partitions. In addition, we want to avoid precomputing and storing the results of all possible $\mathcal{O}(N^2)$ interval queries due to the quadratic space dependency;
if we could simply execute all $\mathcal{O}(N^2)$ queries, there is no need for a synopsis structure in the first place.}

\subsubsection{Technical Details}
The following contains technical details about the assumptions that can be skipped for brevity.
Let $\bucket_i\in R$ be a partition of partitioning $R$ (notice that the set of leaves $T$ corresponds to a partitioning $R$).
Let $N_i$ be the number of items in partition (i.e. leaf) $\bucket_i$.
Furthermore, let $R_q$ be the partitions of partitioning $R$ that intersect the query $q$ (either fully or partially). Finally, let $P_i(q)=P\cap \bucket_i \cap q$, be the set of items in bucket $i$ that are contained in query $q$, as we had in the previous section.
We assume that the valid queries of $Q$ with respect to a partitioning $R$, are the queries that intersect sufficiently many items in each partition they intersect. Precisely, for a partition $\bucket_i\in R_q$, if $N_{i,q}=|P_i(q)|$ is the number of items in $\bucket_i$ that are contained in $q$, then we assume that $N_{i,q}\geq \delta N$.

Basically, the assumption states that we only care about queries that meaningfully overlap when they do partially overlap.
Accordingly, we can define the set of ``meaningful'' queries with respect to a partitioning $R$ as $Q'=\{q\in Q\mid N_{i,q}\geq \delta N, \forall \bucket_i\in R_q\}$.
One can think of this set as rectangles whose boundaries are grounded with actual tuples in the dataset.

From Section~\ref{sec:background} we can define the ``single-partition'' $\bucket_i$ variance of AVG queries as $V_i(q)=\frac{1}{N_i}\cdot\frac{1}{N_{i,q}^2}\left[N_i\sum_{h\in P_i(q)}t_h^2 - \left(\sum_{h\in P_{i}(q)} t_h\right)^2\right]$,
while for SUM queries
$V_i(q)=\frac{1}{N_i}\cdot\left[N_i\sum_{h\in P_{i}(q)}t_h^2 - \left(\sum_{h\in P_{i}(q)} t_h\right)^2\right]$.
For COUNT queries the formula for $V_i(q)$ is identical to the formula for SUM queries with $t_h=1$ or $t_h=0$.


Based on these formulas and the above assumptions, we approximate Equation \ref{eq:main} with a simpler problem.
\reviewtwo{The high-level idea is that we focus on the problem of minimizing the maximum variance of queries that are fully contained in only one partition. We show that solving this simpler problem leads to efficient approximations for our original problem, where queries intersect multiple partitions.}

\reviewtwo{Let $Q_R^1\subseteq Q$ be the set of meaningful queries in $Q$ that intersect only one partition in partitioning $R$, and let $i_q$ be that partition.
We define a new problem, as: $R'=\argmin_{R\in \mathcal{R}} \max_{q\in Q_R^1}V_{i_q}(q)$, where $\mathcal{R}$ is the family of all possible valid partitionings.} The next lemma is shown in Appendix~\ref{apndx:proofs}.

\begin{lemma}
\label{lem:Obs}
It holds that
\[\max_{q\in Q}error(q,R')\leq \sqrt{k}\min_{R\in \mathcal{R}}\max_{q\in Q} error(q,R),\]
for SUM and COUNT queries and 
\[\max_{q\in Q}error(q,R')=\min_{R\in \mathcal{R}}\max_{q\in Q} error(q,R),\]
for AVG queries.
\end{lemma}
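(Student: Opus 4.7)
The plan is to reduce everything to a single clean geometric identity: for any query $q$ and any partition $\bucket_i\in T_q$, the clipped rectangle $q'=q\cap \bucket_i$ is itself a valid rectangular query that lies fully inside $\bucket_i$ and selects exactly the items $P_i(q)$. Hence $q'\in Q_R^1$ (it inherits the $N_{i,q}\geq \delta N$ condition from $q$), and $V_i(q)=V_{i_{q'}}(q')$ by construction. This lets me convert the worst multi-partition query into a collection of at most $|T_q|\leq k$ single-partition queries, which is exactly what the definition of $R'$ is phrased in terms of.

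Next I would bound $error(q,R)^2/\lambda^2=\sum_{\bucket_i\in T_q}w_i^2 V_i(q)$ in two parallel ways. For SUM and COUNT the weights are $w_i=1$, and $|T_q|\leq k$, so the sum is at most $k\cdot \max_{\bucket_i\in T_q}V_i(q)$; combined with the clipping identity this gives
\[error(q,R)\;\leq\;\sqrt{k}\;\cdot\;\lambda\sqrt{\max_{q'\in Q_R^1}V_{i_{q'}}(q')}.\]
For AVG the weights satisfy $w_i\in[0,1]$ with $\sum w_i=1$, hence $\sum w_i^2\leq \sum w_i=1$, so the same bound holds with the $\sqrt{k}$ factor replaced by $1$. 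Taking $\max$ over $q\in Q$, I obtain $\max_q error(q,R)\leq c\cdot \max_{q'\in Q_R^1}\lambda\sqrt{V_{i_{q'}}(q')}$ for every partitioning $R$, with $c=\sqrt{k}$ (SUM/COUNT) or $c=1$ (AVG).

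To close the chain I would invoke optimality on both sides. Any single-partition query $q'\in Q_R^1$ has $error(q',R)=\lambda\sqrt{V_{i_{q'}}(q')}$ since its sum has one term with $w=1$; therefore, for the optimum $R^*=\argmin_R\max_q error(q,R)$, we have $\max_q error(q,R^*)\geq \max_{q'\in Q_{R^*}^1}\lambda\sqrt{V_{i_{q'}}(q')}$. By the definition of $R'$,
\[\max_{q'\in Q_{R'}^1}V_{i_{q'}}(q')\;\leq\;\max_{q'\in Q_{R^*}^1}V_{i_{q'}}(q').\]
Chaining the three inequalities yields $\max_q error(q,R')\leq c\cdot \min_R \max_q error(q,R)$, giving the claimed $\sqrt{k}$ bound for SUM/COUNT. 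For AVG, the reverse direction $\max_q error(q,R')\geq \min_R\max_q error(q,R)$ is immediate because $R'\in \mathcal{R}$, so both bounds collapse to an equality.

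The main obstacle I anticipate is rigorously justifying the clipping identity and its validity as a member of $Q_R^1$: I need the intersection of two rectangular predicates to remain a rectangular predicate (straightforward for axis-aligned rectangles), the per-partition variance formula to depend only on the item set $P_i(q)$ (immediate from the definition of $V_i(q)$ in Section~\ref{sec:OptPart}), and the $\delta N$ meaningfulness assumption to transfer from $q$ to $q'$ (which holds because $|q'\cap P|=N_{i,q}\geq \delta N$). Once these three routine observations are verified, the weighted-sum bounds and the optimality chain go through by elementary manipulations.
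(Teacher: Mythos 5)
Your proof is correct and follows essentially the same route as the paper's: bound the multi-partition error by $\sqrt{\sum_{\bucket_i} w_i^2}$ times the maximum single-partition variance (giving $\sqrt{k}$ for SUM/COUNT via $w_i=1$ and $\sum w_i^2\leq 1$ for AVG), then chain through the optimality of $R'$ and the fact that a single-partition query has error exactly $\lambda\sqrt{V_{i_q}(q)}$, with the AVG equality following because $R'\in\mathcal{R}$. The only difference is presentational: you make the clipping identity $q'=q\cap \bucket_i\in Q_R^1$ with $V_i(q)=V_{i_{q'}}(q')$ explicit, which the paper leaves implicit in its step $V_i(q')\leq V'$.
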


From Lemma~\ref{lem:Obs} it follows that any $\alpha$-approximation algorithm for the newly defined problem is also a $\alpha$-approximation for our original problem for AVG queries and a $\sqrt{k}\cdot\alpha$-approximation for SUM and COUNT queries. Notice that it is much easier to handle the newly defined problem since we can find the query with the maximum error with respect to partitioning $R$ by looking only on queries that are fully contained in partitions $b\in R$.

\reviewtwo{Finally, we note that the approximation ratio for SUM and COUNT queries in Lemma~\ref{lem:Obs} is stated in the worst case. In particular, if we guarantee that a query can partially intersect at most $\mathcal{K}$ partitions then the approximation factor is $\sqrt{\mathcal{K}}$ instead of $\sqrt{k}$. For example, in 1D a query interval can partially intersect at most $2$ partitions, so the approximation factor is $\sqrt{2}$.}

\subsection{Algorithm in 1D}
\newcommand{\MAXV}{\mathcal{M}}
\newcommand{\AMAXV}{\overline{\mathcal{M}}}
Considering the variance function for COUNT, we can show that the optimum partitioning for COUNT queries in 1D consists of equal size partitions and hence we can construct it in linear time (we provide the proof in Appendix~\ref{apndx:proofs}). Next, we mostly focus on SUM and AVG queries.

First, we consider the $1$-d case
where we have a collection of tuples $P = \{(c_i, a_i)\}_{i=1}^N$.
We have a defined query type, i.e., SUM, AVG, and we want to find the partitioning that minimizes the maximum estimation error for that query type. In this first case, we start by developing a strawman algorithm: one where we essentially enumerate all possible 1d aggregate queries over the full data.

To do so, we first sort the tuples with respect to the predicate values $c_i$. Then we define the dynamic programming table $A$ of $N$ rows and $k$ columns, where $A[i,j]$ is the optimal solution among the first $i$ data items (with respect to the predicate values) with at most $j$ partitions.
Let $\MAXV$ be function that takes as input an interval $[i_1,i_2]$ and returns the maximum variance of a query $q\in Q$ that lie completely inside $[i_1,i_2]$.

\vspace{1em} \noindent $\MAXV(i_1,i_2):$
\begin{enumerate}
    \item $\mu = 0$
    \item For all meaningful subintervals $[g,w] \subset [i_1,i_2]$:
    \begin{enumerate}
    \item Let $\phi$ be appropriately defined for the query type and the predicate $g \le C \le w$.
    \item $\mu = \max\{\mu, var(\phi(P\cap [g,w]))\}$
    \end{enumerate}
    \item return $\mu$
\end{enumerate}

\noindent Using the function $\MAXV$ we can define the recursion
$$A[i,j]=\min_{h<i}\max\{A[h,j-1], \MAXV([h+1,i])\}.$$
Notice that we can easily solve the base cases $A[i,1]$, $A[1,j]$ using the $\MAXV$ function.
In an efficient implementation of $\MAXV$ the subquery variances are computed with pre-computed prefix sums.
Since $\MAXV$ function considers $\mathcal{O}(|Q|)$ queries ($\mathcal{O}(N^2)$ assuming all possible queries). Hence, the total running time of the basic DP algorithm is $\mathcal{O}(kN^2|Q|) = \mathcal{O}(kN^4)$. 
This algorithm achieves an optimal partitioning for AVG queries and $\sqrt{2}$ approximation for SUM queries.

\vspace{0.5em} \noindent \textbf{Faster Algorithm With Monotonicity. }
The first insight ignores the query enumeration problem but focuses on the structure of the DP itself.
Suppose, we have a query $q$ completely inside a partition $\bucket_x$ and let $\bucket_y$ be another partition such that $\bucket_x\subseteq \bucket_y$.
Then, we can show that $V_x(q)\leq V_{y}(q)$. 
This statement is very intuitive: adding irrelevant data to a query can only make the estimate worse.

The exact proof of this statement depends on the type of the query. For example for the AVG query,
we have that $P_{x}(q)=P_{y}(q)$, $N_{x,q}=N_{y,q}$, and $N_x<N_{y}$. Hence,
\begin{align*}
V_x(q)&=\frac{1}{N_{x,q}^2}\left[\sum_{h\in P_{x}(q)}t_h^2 - \frac{\left(\sum_{h\in P_{x}(q)}t_h\right)^2}{N_x}\right]\\
&\leq \frac{1}{N_{y,q}^2}\left[\sum_{h\in P_{y}(q)}t_h^2 - \frac{\left(\sum_{h\in P_{y}(q)}t_h\right)^2}{N_{y}}\right]\\
&=V_{y}(q).
\end{align*}
Similar arithmetic shows that the same statement holds for SUM and COUNT queries (with a caveat in the next section).

Then, we can argue that for $h_1\leq h_2$ it holds that $A[h_1,j-1]\leq A[h_2,j-1]$ and $\MAXV([h_1+1,i])\geq \MAXV(h_2+1,i)$. 
The second inequality holds because of our last observation.
The first inequality holds because we can use the partition of $A[h_2,j-1]$ in the first $h_1$ items as a valid partition, so $A[h_1,j-1]$ can only be smaller. Because of this property, a binary search over the values of $h$ can return the value $\hat{h}$ such that $\max\{A[\hat{h},j-1], \MAXV([\hat{h}+1,i])\}$ is minimized (more details are given in Appendix~\ref{appndx:Error}). The running time of this DP algorithm is $\mathcal{O}(kN^3\log(N))$.

\subsubsection{Faster Approximate Dynamic Program}
\label{sec:dp_algo}
Next, we want to avoid having to evaluate every possible query exactly during construction.
In this step, we can leverage a uniform sample of data to estimate the query variance; we can take a uniform sample of $m$ tuples $P=\{c_i,a_i\}^m_{i=1}$ to perform the optimization.
This approximation would create partitioning rules, which we could then resample from to construct our stratified sampling.
With sampling, the complexity of the optimization algorithm would be $\mathcal{O}(k N^2 m\log(m))$.
There is an interplay between sampling and the proofs above, where we require that every query interval receives roughly the same fraction of samples. A more detailed discussion over the fraction of samples can be found in Appendix~\ref{apndx:proofs}.
Sampling avoids the circular logic in our strawman algorithm, and the optimal partitions can be found with far less computation than the exact evaluation of every possible query over the original dataset.

\newchanges{Next, instead of considering all possible query intervals in a partition, we consider a subset of such intervals to improve the running time.
For SUM (and COUNT) queries, given an interval $[i_1,i_2]$ that contains $m'$ sampled items, we consider picking only $\mathcal{O}(1)$ items
$L\subseteq P\cap [i_1,i_2]$ such that there exists an item $x\in L$ where the intervals $[i_1,x]$ and $[x,i_2]$ contain the same number of samples. Then we consider only the interval queries whose endpoints are defined by the selected items $L$. Hence, the number of query intervals we consider in a partition is still $\mathcal{O}(1)$.
This change affects line (2) in the algorithm above. Instead of considering all meaningful subintervals $[g,w] \subset [i_1,i_2]$, we consider only the intervals defined by the items in $L$ leading to a new running time of $\mathcal{O}(k m\log m)$.
We show that the maximum variance we get by checking only this subset of queries is not
smaller than the maximum variance over all meaningful queries in $[i_1,i_2]$ divided by $4$.
Our new algorithm finds a partitioning where the maximum error of a SUM (or COUNT) query is at most $2\sqrt{2}$ times the maximum error of the optimum partitioning.
For AVG queries, we show that the query with the maximum variance in a partition has length at most $2\delta m$. We precompute the variance of all possible length $\delta m$ queries (there are only $\mathcal{O}(m)$ of them) and store them in a binary search tree. Given an interval $[i_1, i_2]$ we can return the length $\delta m$ query with the maximum variance in $\mathcal{O}(\log m)$ time.
Overall the algorithm runs in $\mathcal{O}(km\log^2 m)$ time and finds a partitioning where the maximum error of an AVG query is at most $2$ times the maximum error of the optimum partition.
Due to the space limit, all algorithms and proofs are moved in Appendix~\ref{apndx:proofs}.}

To summarize, let $N$ be the total size of the dataset, $k$ be the desired number of partitions, $m$ be the number of samples drawn for optimization~\footnote{** Indicates the approximation algorithm used in the experiments}:
\begin{table}[ht!]
    \centering
    \begin{tabular}{|l|c|}
    \hline
        Naive DP & $\mathcal{O}(kN^4)$ \\
        Faster DP &  $\mathcal{O}(kN^3\log N)$ \\
        Approx Sampling &  $\mathcal{O}(kN^2 m\log m)$ \\
        Sampling + Discretization (**) &  $\mathcal{O}(k m\log m)$ \\
        \hline
    \end{tabular}
\end{table}
\vspace{-1em} \subsection{Algorithm in Higher Dimensions}
\label{sec:kd-tree}
While the DP algorithm gives an optimum partitioning in a single dimension, it is not clear how to extend this to multiple dimensions. 

For higher dimensions, we have to consider a space of partition trees that each layer defines rectangular partitioning. 
Such a space is well-parameterized by the class of balanced k-d trees.
A k-d tree is a binary tree in which leaf nodes represent $d$-dimensional points. Every non-leaf node can be thought of as a partitioning plane that divides the parent space into two parts with the same number of items. Points to the left of this hyperplane are represented by the left subtree of that node and points to the right of the hyperplane are represented by the right subtree.
We note that we can also design k-d trees with fanout $2^d$ by splitting a node over all dimensions simultaneously.

\begin{enumerate}
    \item Construct a balanced k-d tree $U$ over $\{c_i\}_{i=1}^N$.
    
    \item Start with an empty tree $U'$ initialize as the root of $U$.
    
    \item While the number of leaf nodes is less than $k$:
    \begin{enumerate}
        \item For all new leaf nodes $v$ in $U'$:
         \begin{enumerate}
            \item Apply $\MAXV$ on the items in $v$.
        \end{enumerate} 
        \item For the leaf node that contained the query with the maximum variance, add its children (from the corresponding node in $U$) to $U'$.
    \end{enumerate}
\end{enumerate}

The precomputation time is $\mathcal{O}(N\log N)$ to construct $U$. After the precomputation, the algorithm runs in $\mathcal{O}(kN^{1-1/d}|Q|)$ time.
The tree $U'$ we return has $\mathcal{O}(k)$ space and it gives an optimum partition with at most $k$ leaf nodes with respect to the k-d tree $U$.
Hence, $U'$ is optimum with respect to AVG queries and
has a $\sqrt{k}$-approximation for SUM and COUNT queries.

\reviewtwo{
A naive way to find the leaf that contains the query with the maximum variance is to consider all possible rectangular queries in the leaf, $\mathcal{O}(|Q|)=\mathcal{O}(N^{2d})$.
By extending the faster discretization method from 1D, in Appendix~\ref{apndx:proofs} we can also get a constant approximation of the query with the maximum variance in near-linear time with respect to the number of sampled items.
}

\subsection{Summary}
To summarize our analysis, we propose an optimization framework that returns partitions that control the maximum query error over a workload of hypothetical queries.
The key approximation parameters in this framework avoid having to enumerate all possible queries.
Our analysis provides a synopsis data structure and an approximation framework that has the following parameters. We summarize the effects when these parameters are increased: 
\begin{table}[ht!]
    \centering
    \begin{tabular}{c|p{2.5cm}|p{2.5cm}}
        $\textbf{Knob}$ & \textbf{Effect} & \textbf{Tradeoff}\\
        \hline
        Sample Size $K$ & + Accuracy & + Query Latency \\
        Partitions $k$ \tablefootnote{Notice that $B$ is related to the number of leaves $k$.} & + Accuracy; - Query Latency & + Init Time; + Update Cost \\
        Apx factors $m,L$ & - Worst-Case Error & + Init Time \\
    \end{tabular}
\end{table}

\emph{In our experiments, we control $K$ and $k$ across all baselines. We ensure that the query latency of the queries is roughly the same by budgeting the same amount of precomputation and samples.}

\reviewthree{
\paragraph{Extensions}
PASS can be extended to handle multiple predicates, group-by's, and categorical queries. To handle multiple predicate column sets, we construct different trees based on statistics from the workload (see notes on statistics from Facebook~\cite{agarwal2013blinkdb}). In the full version, we demonstrated the scenario of ‘workload-shifting’ in which PASS can use a synopsis that is built for one query template to solve other query templates that share one or more attributes.
Furthermore, by applying any dictionary encoding we can handle queries over categorical variables. 
Finally, PASS can handle group-bys over categorical columns, i.e. each group-by condition can be rewritten as an equality predicate condition. Then we can aggregate answers for all the selection queries to generate a final answer.
}
\reviewtwo{
\paragraph{Dynamic updates}
PASS can easily handle new insertions (or deletions) while maintaining the statistical consistency of the estimates for COUNT, SUM, and AVG queries. In particular, we can maintain samples using Reservoir sampling~\cite{vitter1985random}. Each time that a new item $t_i$ is inserted, Reservoir sampling might choose to replace a sample $t_j$ with $t_i$. Assume that $t_j$ belongs in partition $P_j$ and $t_i$ belongs in partition $P_i$. We remove $t_j$ from $P_j$ and we insert $t_i$ in $P_i$. Furthermore, we update all the statistics in the nodes from the leaf $P_i$ to the root and from the leaf $P_j$ to the root of the partition tree. In each node, we can update the statistics in $\mathcal{O}(1)$ time so the total update time depends on the height of the tree (for $d=1$ we have $\mathcal{O}(\log k)$).
However, if there are enough updates to the structure, re-optimization of the partitioning may be needed. In that case Split and Merge technique~\cite{donjerkovicdynamic, gibbons2002fast} might help to get efficient update time.
We leave this part as an interesting future problem.
}
\section{Experiment Results}
\label{sec:exp}
We evaluate PASS on a number of different datasets and workloads.
We run our experiments on a Linux machine with an Intel Core i7-8700 3.20GHz CPU and 16G RAM.

\subsection{Experiment Setup}
We follow the problem setup described in Section 3.1. Given an aggregation column and a set of predicate columns, we construct a pass data structure with a specified construction time (number of leaf nodes) and query time (sampling rate).

\subsubsection{Datasets}
\textsf{Intel Wireless Dataset}: The Intel wireless dataset \cite{intelwireless} is an Internet of Things dataset with data from 54 sensors deployed in the Intel Berkeley Research lab in 2004. It contains 3 million rows, 8 columns including humidity, temperature, light, voltage as well as date and time that are measured by different sensors. In our experiments, we use the $time$ column for predicates and the $light$ column for aggregation.

\textsf{Instacart Online Grocery Shopping Dataset 2017}: The Instacart Online Grocery Shopping dataset 2017 \cite{instacart} is released by the grocery delivery service Instacart. We use the $order\_product$ table of 1.4 million entries. Each entry has 4 columns: the $order\_id$, $product\_id$, $add\_to\_cart\_order$ and $reordered$. We use the $product\_id$ column for predicate and $reordered$ column for aggregation.

\textsf{New York City Taxi Trip Records Dataset}: The New York City Taxi Trip Records dataset \cite{nyctaxi} is published by the NYC Taxi and Limousine Commission (TLC). The dataset contains the yellow and green taxi trip records including fields capturing pick-up and drop-off dates/times, pick-up and drop-off locations, trip distances, itemized fares, rate types, payment types, and driver-reported passenger counts. In our experiments, we use the \reviewtwo{7.7 million} records collected in January 2019, and unless otherwise specified, we use the $pickup\_datetime$ column for predicate and the $trip\_distance$ column for aggregation. 

\subsubsection{Metrics}
Our primary metric is \textbf{relative error} which is the difference of estimated query result and the ground truth divided by the ground truth--for fixed sample size and precomputation budget.
In all the experiments, we evaluate the median relative error over randomly selected queries.
We also measure the \textbf{confidence interval ratio} (CI Ratio) which is the ratio between the half of estimated confidence interval and the ground truth. 
This quantifies the accuracy of the confidence intervals found with each framework.
Since one advantage of PASS is that it enables aggressive and reliable data skipping, we measure the \textbf{skip rate}, which is the ratio of the tuples that are safely skipped during query processing. 

\subsubsection{Baselines}
Every baseline gets a sampling budget of $K$ and a query precomputation budget of $B$.

\begin{itemize}
    \item \textbf{Uniform Sampling (US)} Sample $K$ records from the database uniformly at random.
    \item \textbf{Stratified Sampling (ST)} Create $B$ strata, and uniformly sample $\frac{K}{B}$ records from each one. \emph{We use equal depth partitioning to construct the strata.}
    \item \textbf{AQP++}\cite{peng2018aqp++}. We implemented the hill-climbing algorithm described in the AQP++ paper. For the 1-D experiments, instead of using a BP-cube, we partition the dataset with the hill-climbing algorithm then pre-compute aggregations on the partitions to combine with the sampling results. For multi-dimensional experiments, we construct a KD-Tree which we describe in detail in Section \ref{sec:multi-d}.
\end{itemize}

Unless further specified, we use a sample rate of 0.5\%, $\lambda$=2.576 for a 99\% confidence interval and a precomputation budget of 64 queries.
The sample size is much larger than the precomputation size.
Thus, the sample size is a good proxy for query latency.



\begin{table*}[t]
    \centering
    
    \begin{tabular}{|c|c|}
        \hline
        \multicolumn{2}{|c|}{------------}\\\hline
        Approach & Mean Cost \\\hline
        US & 0.09s \\
        ST & 0.35s \\
        AQP++ & 0.8s \\
        PASS-ESS & 23s \\
        \newchanges{PASS-BSS2x} & \newchanges{23s} \\
        \newchanges{PASS-BSS10x} & \newchanges{23s} \\\hline
    \end{tabular}
    \begin{tabular}{|c|c|c|}
        \hline
        \multicolumn{3}{|c|}{COUNT}\\
        \hline
        Intel & Insta & NYC \\\hline
        0.94\%  & 1.20\% & 0.50\% \\ 
        0.16\% &  0.22\% & 0.08\\
        0.33\% & 0.37\% & 0.16\%\\  
        0.03\% & 0.038\% & 0.02\% \\ 
        \newchanges{0.12\%} & \newchanges{0.17\%} & \newchanges{0.07\%}\\ 
        \newchanges{0.06\%} & \newchanges{0.06\%} & \newchanges{0.02\%}\\\hline 
    \end{tabular}
    \begin{tabular}{|c|c|c|}
        \hline
        \multicolumn{3}{|c|}{SUM}\\
        \hline
        Intel & Insta & NYC \\\hline
        1.61\%  & 1.82\% & 1.0\% \\ 
        1.0\% &  1.27\% & 0.8\% \\ 
        0.5\% & 0.47\% & 0.2\% \\ 
        0.05\% & 0.07\% & 0.044\% \\ 
        \newchanges{0.23}\% & \newchanges{0.3\%} & \newchanges{0.16\%}\\ 
        \newchanges{0.1\%} & \newchanges{0.11\%} & \newchanges{0.07\%}\\\hline 
    \end{tabular}
    \begin{tabular}{|c|c|c|}
        \hline
        \multicolumn{3}{|c|}{AVG}\\
        \hline
        Intel & Insta & NYC \\\hline
        1.21\%  & 1.25\% & 0.87\% \\ 
        1.0\% &  1.22\% & 0.89\% \\ 
        0.4\% & 0.31\% & 0.22\% \\ 
        0.04\% & 0.057\% & 0.04\% \\ 
        \newchanges{0.2\%} & \newchanges{0.23\%} & \newchanges{0.15\%}\\ 
        \newchanges{0.08\%} & \newchanges{0.09\%} & \newchanges{0.07\%}\\\hline 
    \end{tabular}
    \caption{\newchanges{Controlling for worst-case query latency (total number of tuples processed), we demonstrate that it is possible to construct a synoposis that is highly accurate (less than $.1\%$ error) across 2000 random SUM/COUNT/AVG queries. The caveat is a high upfront optimal partitioning cost.}}
    \label{tab:full-benchmark}
\end{table*}

\begin{figure}[ht]
  \centering
  \includegraphics[width=1\linewidth]{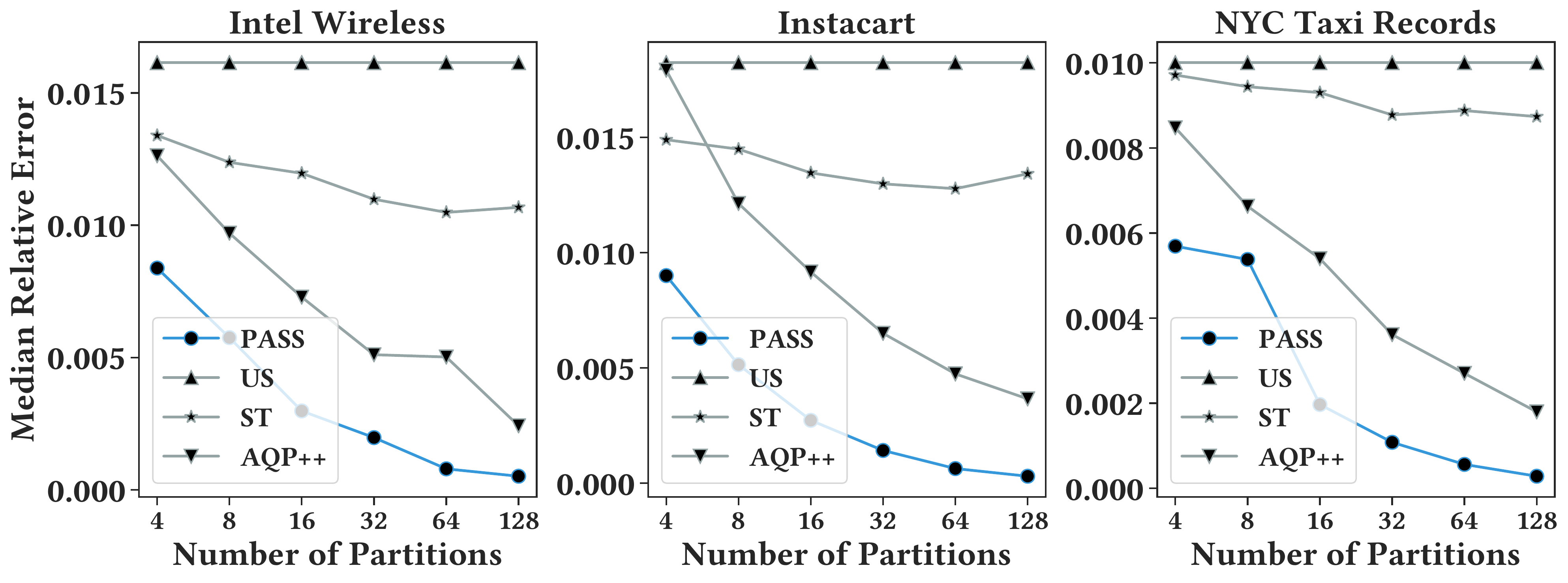}
  \caption{Median relative error of 2000 random SUM queries on the 3 real-life datasets using a varying number of partitions and a fixed sample rate of 0.5\%.}
  \label{fig:macro-re-by-k}
\end{figure}

\subsubsection{Comparing Baselines}
\newchanges{In AQP, one usually evaluates accuracy for a fixed number of sample data points processed. Since PASS couples result estimation with data skipping, controlling for the exact sample size is less intuitive. The most straightforward way to compare sampling rates across techniques is to use the effective sampling size (\textbf{ESS}) (average number of data points processed per query divided by total data points). ESS is a good metric when the main objective is to control for query latency because it basically measures the IO cost of answering an aggregate query. However, solely comparing techniques w.r.t ESS can be misleading if one is concerned about the size of the synopsis structure. Data skipping could allow one to include more samples into the synopsis if not all of them are likely to be used for any given query. Thus, we additionally include a bounded sampling size (\textbf{BSS}) comparison where techniques are restricted to a maximum number of samples. }

\subsection{Accuracy Evaluation}
\label{sec:endtoend}
Table \ref{tab:full-benchmark} illustrates the key premise of the paper: with PASS, a user pays an upfront cost for increased accuracy over future queries.
As described in the previous section, we include both a comparison in terms of ESS and BSS variants of PASS.
In the ESS case, for three datasets and randomly generated queries, a PASS synopsis (0.5\% sampling rate and 64 partitions) achieves less than a 0.1\% median relative error.
PASS is more accurate than the baselines across datasets, but does require a larger upfront optimization cost. 
\newchanges{However, ESS is an optimistic setting in certain environments with memory constraints. Thus, we additionally present the BSS versions of PASS.
In Table \ref{tab:full-benchmark}, we include PASS-BSS2x and PASS-BSS10x which are bounded to 2 times and 10 times of the online storage of uniform sampling.
Due to the data skipping, PASS-BSS2x evaluates results about 13\% faster than US/ST. 
Even with bounded storage, they still outperform other baselines significantly in terms of accuracy.
For the rest of the experiments, we will focus on the ESS setting unless explicitly mentioned as it is the most intuitive.}



\subsubsection{As a Function of Precomputation}
This construction cost is controlled by the number of partitions, which is also the amount of space allocated for aggregate precomputation.
Figure \ref{fig:macro-re-by-k} illustrates the accuracy on these three datasets for a fixed sample size of 0.5\% and a varying degree of partitions (or strata in stratified sampling). 
As the number of partitions decreases the benefits of PASS also decrease. 
\emph{PASS gives the user a new axis for control in AQP, where she can not only trade-off query latency but also data structure construction time for additional accuracy.}

\begin{figure}[ht]
  \centering
  \includegraphics[width=1\linewidth]{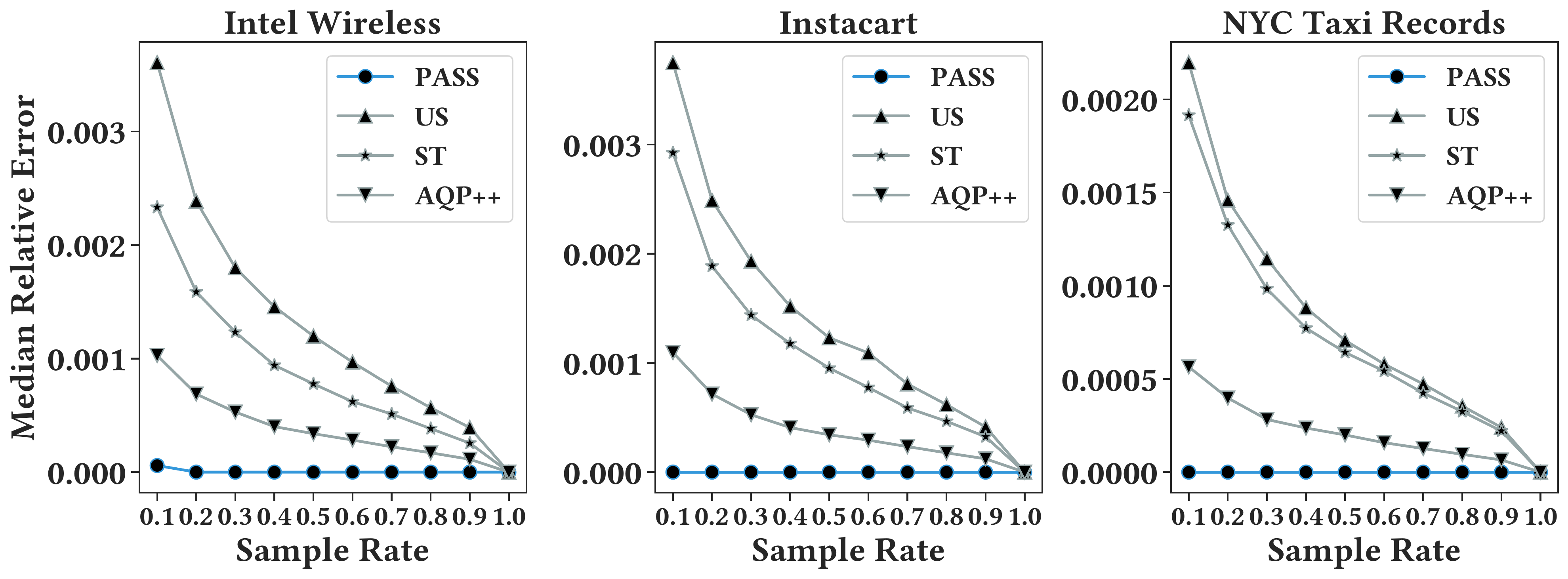}
  \caption{Median relative error of 2000 random SUM queries on the 3 real-life datasets using a varying sample rate and a fixed number of partitions of 64.}
  \label{fig:macro-re-by-sr}
\end{figure}

\begin{figure}[ht]
  \centering
  \includegraphics[width=1\linewidth]{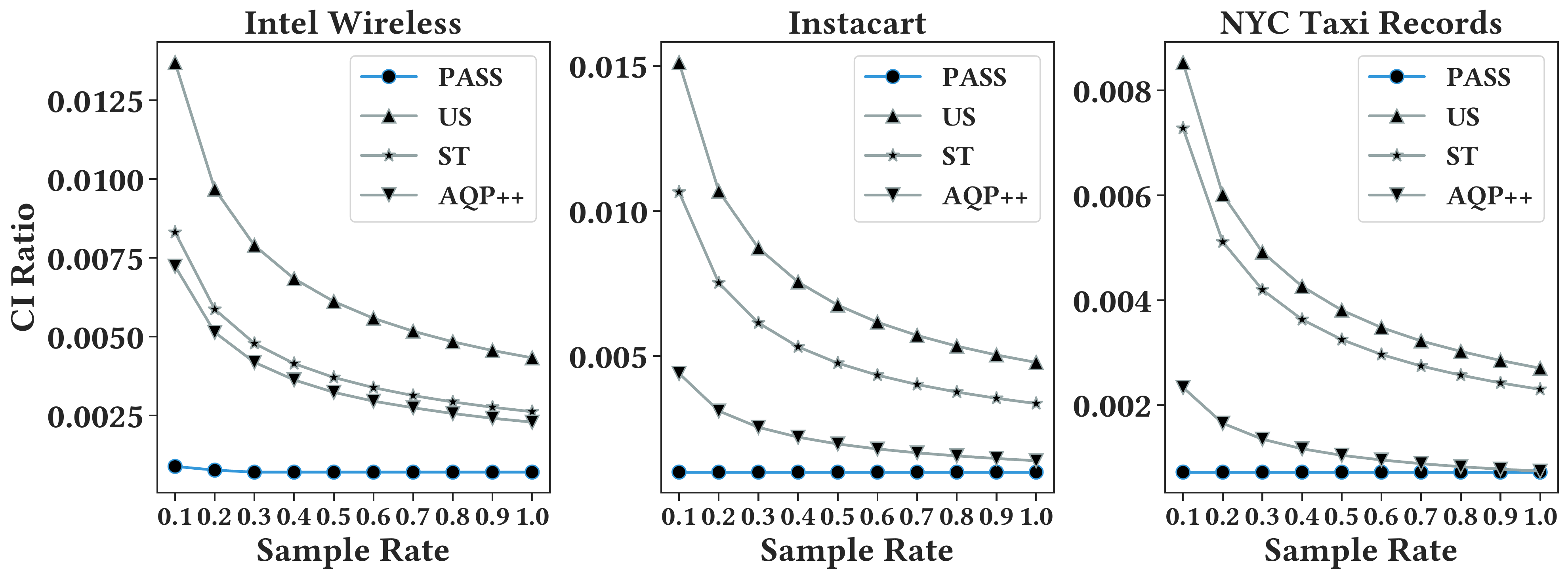}
  \caption{Median confidence interval ratio of 2000 random SUM queries on the 3 real-life datasets using a varying sample rate and a fixed number of partitions of 64.}
  \label{fig:macro-oldcir-by-sr}
\end{figure}

\subsubsection{As a Function of Sample Size}
To better understand how each baseline performs, for each dataset, we first fix the partition size to 64 and vary the sample rate from 10\% to 100\%. Figure  \ref{fig:macro-re-by-sr} shows the median relative error for 2000 random SUM queries on the Intel Wireless dataset, Instacart dataset, and the New York City Taxi Trip Records dataset. PASS outperforms other baselines starting with a 10\% sample rate. 
PASS not only returns an accurate result it also accurately quantifies this result with a confidence interval.
Figure \ref{fig:macro-oldcir-by-sr} shows the median confidence interval ratio on the three real-life datasets under the same experimental setting.
\emph{PASS is a reliable alternative to pure sampling-based synopses when expensive upfront optimization times can be tolerated.}

\subsection{Approximated Dynamic Programming Partitioning vs. Equal Partition}
In this experiment, we use different partitioning algorithms to partition the dataset, then we build a balanced binary tree bottom-up as our partition tree. The partition is later used as the strata for stratified sampling and is combined with the partition tree to solve a query as described in previously in Section \ref{sec:ds_and_query_processing}. We evaluate the approximated dynamic programming partitioning algorithm (ADP) and the equal partitioning (or equal depth, equal frequency) algorithm (EQ). We found our implementation of the hill-climbing algorithm performs very similar to the equal partitioning algorithm, so it is omitted in this experiment. 

We construct a synthetic adversarial dataset of 1 million tuples and 2 attributes. The predicate attribute contains 1 million unique values. The first 875K tuples have 0 as the value of their aggregate attribute and the last 125K tuples are generated by a normal distribution. The left plot of Figure \ref{fig:adp-cir-data1m} shows the result on 2000 random queries generated on the entire dataset and the right plot shows the result on 2000 random queries generated on the last 125K tuples. The results show that our approximated dynamic programming partitioning algorithm outperforms the EQ on the challenging queries and performs similarly on the trivial random queries. 

\begin{figure}[ht]
  \centering
  \includegraphics[width=1\linewidth]{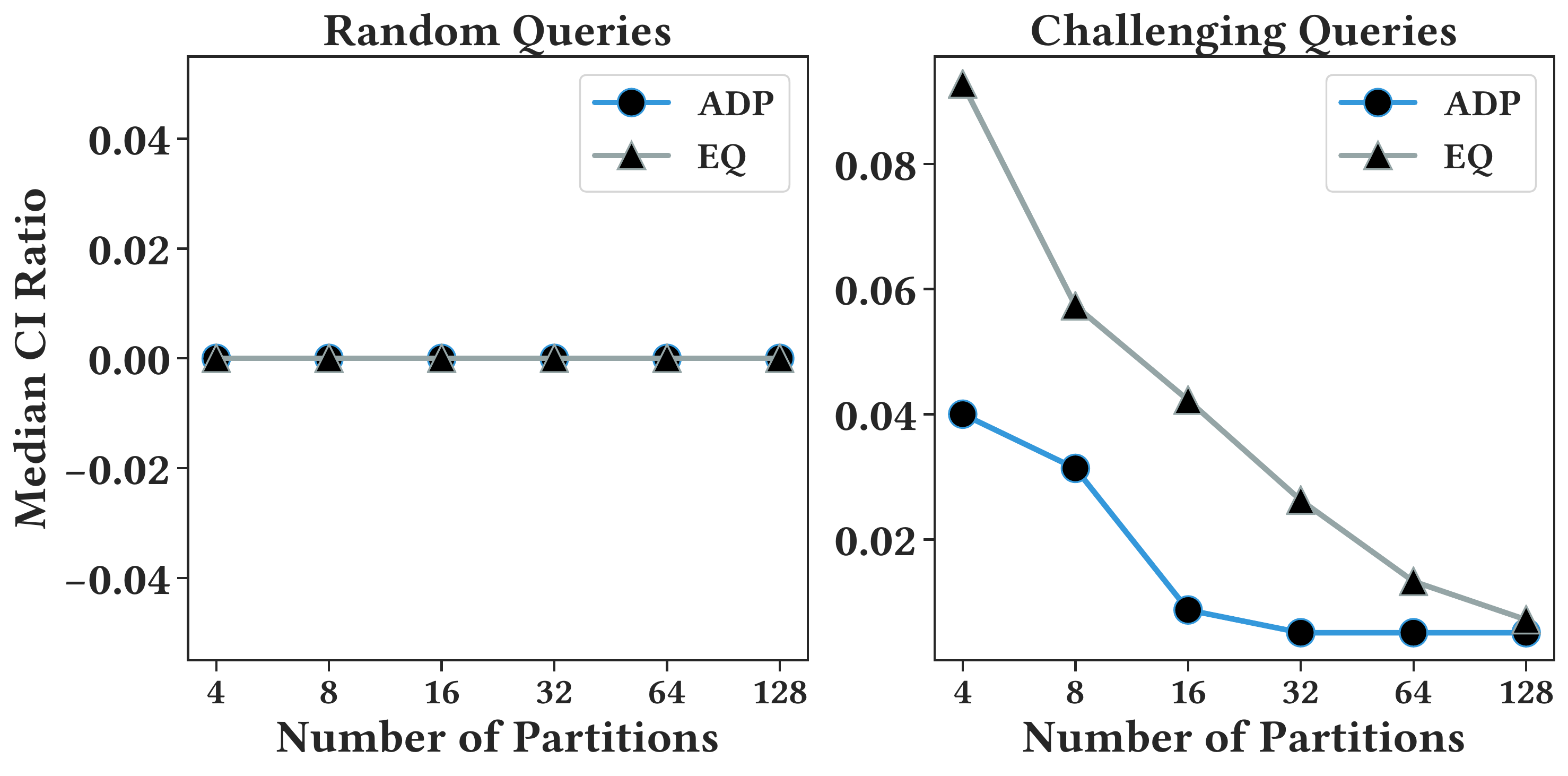}
  \caption{Median confidence interval ratio of Approximated Dynamic Programing partitioning (ADP) vs. Equal Partitioning (EQ) on a synthetic adversarial dataset.}
  \label{fig:adp-cir-data1m}
\end{figure}

Similarly, we evaluate the 3 real-life datasets by two sets of queries. For each dataset, we first randomly generate 2000 queries, then we randomly generate another 2000 challenging queries from the interval with the maximum variance identified using the fast discretization method we discussed in Section \ref{sec:dp_algo}. Figure \ref{fig:adp-cir-challenging} shows the median CI ratio on the challenging queries generated on the Intel Wireless dataset, Instacart dataset, and the New York City Taxi Records dataset. The results suggest that in most cases, ADP outperforms EQ on challenging queries.




\begin{figure}[ht]
\label{adp-eq-challenging}
  \centering
  \includegraphics[width=1\linewidth]{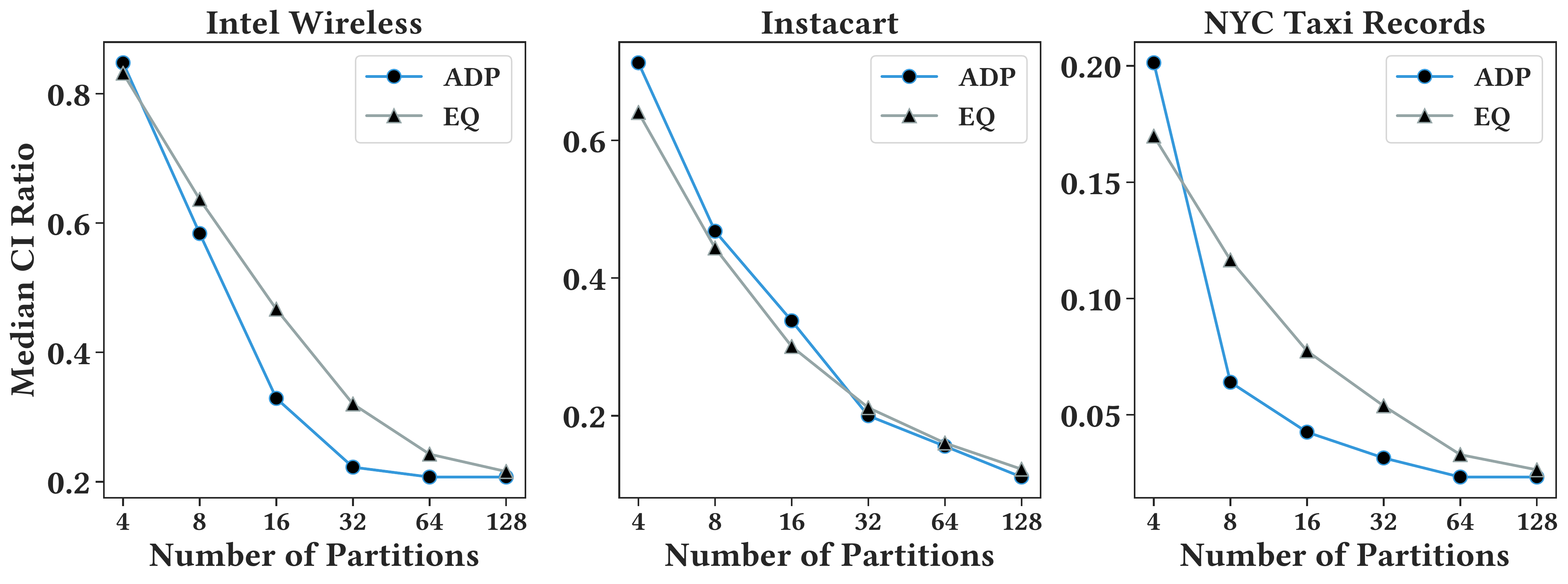}
  \caption{Median confidence interval ratio of ADP vs. EQ on challenging queries of the 3 real-life datasets.}
  \label{fig:adp-cir-challenging}
\end{figure}

\begin{table*}[ht]
    \centering
    
    \begin{tabular}{|c|}
        \hline
        \multicolumn{1}{|c|}{---}\\
        \hline
        \onethree{Approach} \\\hline
        \onethree{PASS-BSS1x}   \\ 
        \onethree{PASS-BSS2x}   \\ 
        \onethree{PASS-BSS10x}   \\ 
        \onethree{VerdictDB-10\%}  \\
        \onethree{VerdictDB-100\%} \\
        \onethree{DeepDB-10\%}  \\
        \onethree{DeepDB-100\%} \\\hline
    \end{tabular}
    \begin{tabular}{|c|c|c|}
        \hline
        \multicolumn{3}{|c|}{\onethree{Mean Cost}}\\
        \hline
        \onethree{Latency(ms)} & \onethree{Storage(MB)} & \onethree{Time(s)}\\\hline
        \onethree{24.8}  & \onethree{0.5} & \onethree{20.7} \\  
        \onethree{25.7} & \onethree{1.4} & \onethree{20.9} \\  
        \onethree{29} & \onethree{5.9} & \onethree{21.1} \\  
        \onethree{31} & \onethree{17.8} & \onethree{17} \\
        \onethree{842} & \onethree{176.8} & \onethree{49} \\
        \onethree{21} & \onethree{21.2} & \onethree{86} \\
        \onethree{22} & \onethree{61.5} & \onethree{154} \\\hline
    \end{tabular}
    \begin{tabular}{|c|c|c|c|c|c|c|}
        \hline
        \multicolumn{7}{|c|}{\onethree{Median Relative Error}}\\
        \hline
        \onethree{Intel} & \onethree{Insta} & \onethree{NYC} & \onethree{NYC-2D} & \onethree{NYC-3D} & \onethree{NYC-4D} & \onethree{NYC-5D} \\\hline
        \onethree{0.34\%} & \onethree{0.4\%} & \onethree{0.2\%} & \onethree{0.68\%} & \onethree{2.9\%} & \onethree{3.4\%} & \onethree{3.6\%} \\ 
        \onethree{0.14\%} & \onethree{0.29\%} & \onethree{0.17\%} & \onethree{0.48\%} & \onethree{2\%} & \onethree{2.1\%} & \onethree{2.26\%} \\ 
        \onethree{0.09\%} & \onethree{0.12\%} & \onethree{0.08\%} & \onethree{0.24\%} & \onethree{0.97\%} & \onethree{0.9\%} & \onethree{1.2\%} \\ 
        \onethree{90.8\%} & \onethree{90.8\%} & \onethree{90.7\%} & \onethree{90.9\%} & \onethree{90.6\%} & \onethree{90.7\%} & \onethree{90.7\%} \\ 
        \onethree{0.09\%} & \onethree{0.01\%} & \onethree{0.07\%} & \onethree{0.27\%} & \onethree{0.46\%} & \onethree{0.47\%} & \onethree{0.48\%} \\ 
        \onethree{0.9\%} & \onethree{65.8\%} & \onethree{0.9\%} & \onethree{5.2\%} & \onethree{24.6\%} & \onethree{24.8\%} & \onethree{25.6\%} \\ 
        \onethree{1.1\%} & \onethree{66.1\%} & \onethree{1.1\%} & \onethree{5.4\%} & \onethree{24.7\%} & \onethree{24.8\%} & \onethree{25.4\%} \\\hline 
    \end{tabular}
    
    \caption{\onethree{We compare the median relative error of three PASS variations with VerdictDB and DeepDB on workloads we used in previous experiments. We measure the average latency of query processing, the storage, and the construction time (training time for DeepDB) required by each approach.}}
    \label{tab:full-benchmark-verdict-deep}
\end{table*}

\subsection{Multidimensional Query Templates}
\label{sec:multi-d}
In this section, we evaluate the performance of PASS on multi-dimensional queries on the NYC Taxi dataset. Using the trip\_distance attribute as the aggregate attribute and pickup\_time, pickup\_date, PULocationID, dropoff\_date, dropoff\_time as the predicates attributes, we build 5 query templates of different dimensions where the $i_{th}$ template uses the first $i$ attribute(s) as predicate attribute(s). 

The PASS variation used in this experiment is called KD-PASS. As described in Section \ref{sec:kd-tree}, we build a KD-Tree that uses the fast discretization method to select the leaf node with the maximum variance for expansion until we reach the maximum leaf count of 1024. Also to make sure the tree is relatively balanced we limit the difference of the depth of leaf nodes to be no more than 2. At each expansion of a node, we find the median of each attribute so the fan-out factor is $2^d$. The leaf nodes of the KD-Tree forms a partition of the dataset which is used by ST for sampling and data skipping.

The baseline in this experiment is called KD-US. KD-US also uses a KD-Tree that always expands the node with the smallest depth and breaks tie randomly until we reach the maximum leaf count. The baseline then constructs a set of pre-computed aggregations based on the partition formed by the leaf nodes which is later combined with uniform sampling to generate the final answer. 

We generate 1000 queries on each query template for evaluation and results can be found in Figure \ref{fig:kdtree-cisr}. On the left plot, we show the median CI ratio of the two approaches which indicates KD-PASS outperforms KD-US. On the right figure, we plot the average skip rate of KD-PASS. We note that as we increase the dimensions of the query, the skip rate decreases. This is expected because as we increase the dimension, the partitions that are relevant to a query (thus no skipping) increase exponentially in the worst case. 

Due to the sizes of datasets, we use a maximum leaf count (i.e. number of partitions) of 1024 and a dimension of 5. Theoretically, there is no limitation in applying our framework to higher dimensions and partition sizes with proper engineering efforts.

\begin{figure}[t]
  \centering
  \includegraphics[width=1\linewidth]{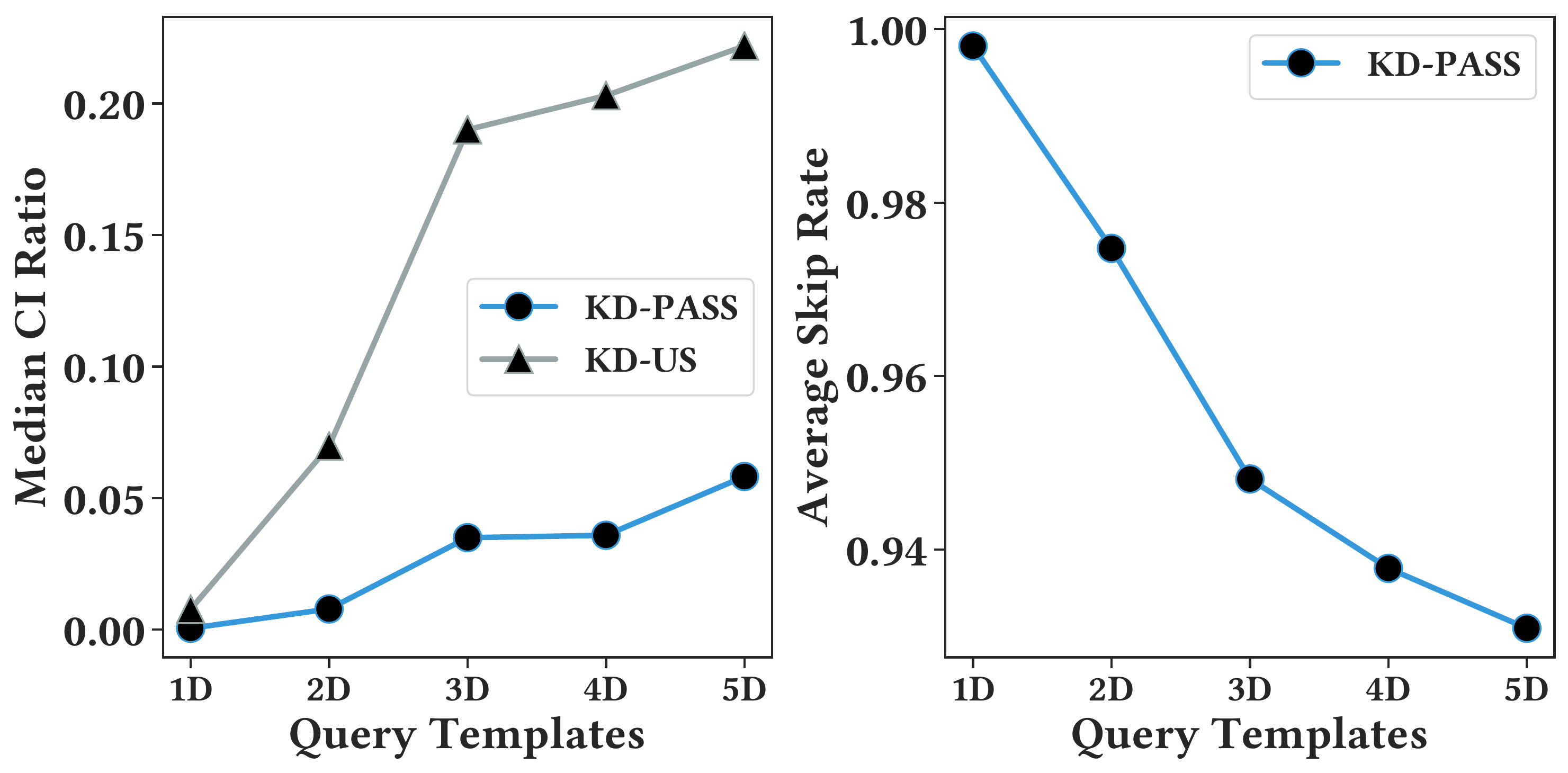}
  \caption{Multidimensional predicates on the NYC Taxi dataset. (Left) The median confidence interval ratio of KD-PASS vs. KD-US. (Right) The skip rate of KD-PASS.}
  \label{fig:kdtree-cisr}
\end{figure}

\iftechreport
\subsubsection{Workload Shift}
In this experiment, we extend the previous experiment on multi-dimensional query templates and evaluate the performance of KD-PASS and KD-US when the workload does not align perfectly with the attributes used to generate the pre-computed aggregates. We use the aggregates generated from Q2, i.e. the 2D query template, to solve all 5 templates. In this setting, the aggregates match Q2 perfectly, but it only shares 1 common attribute with Q1, 2 common attributes with Q3, Q4, and Q5.

The results shown in Figure \ref{fig:shift-cisr} is quite encouraging. In a design like AQP++ that is without data skipping, as the dimension increases, the pre-computed aggregates will be less effective in solving a query because more 'hyper-rectangles' will be intersecting with the query thus increases the area that needs to be solved by sampling, therefore, the error (variance) will increase. However, due to the unique design of PASS, as long as the query template shares one or more common attributes, even the pre-computed aggregates that are not perfectly aligned with the target query can still be used for aggressive and reliable data skipping thus increase the accuracy of the sampling and leads to an overall better result.
This can be a favorable feature in exploratory and interactive data analysis. 

\begin{figure}[ht]
  \centering
  \includegraphics[width=1\linewidth]{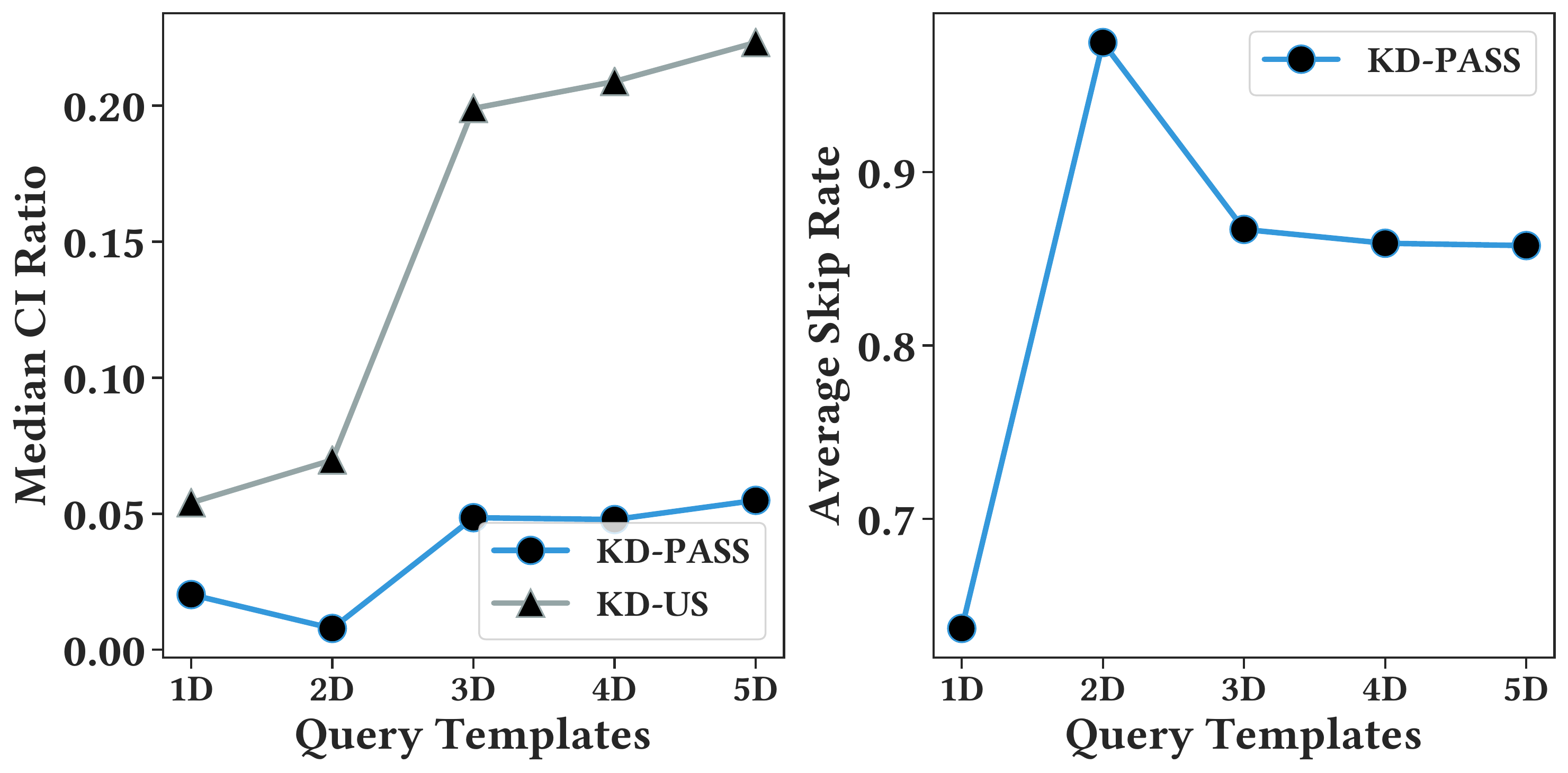}
  \caption{We use the aggregates constructed for a 2D query template to solve query templates of other dimensions. Left figure shows the Median confidence interval ratio of KD-PASS vs. KD-US on the NYC Taxi dataset. Right figure shows the percentage of tuples that are skipped by KD-PASS.}
  \label{fig:shift-cisr}
\end{figure}

\subsubsection{Preprocessing Cost}

Table \ref{tbl:cost} shows the preprocessing cost in seconds required by PASS given different numbers of partitions ($k$ in the table) on the NYC Taxi dataset. We use an optimization sample rate of 0.0025\% for the ADP algorithm to partition the dataset. As expected, the cost increases as we increase $k$ but not significantly. This is because in our implementation we cache the results of the discretization method, therefore the partitioning cost of $k$=4 does not differ a lot with $k$=64 and the difference in preprocessing cost is mostly due to the partition tree construction. \onethree{The cost of $k$=128 increases because more samples are used for a larger partition size}. The results suggest that as we increase $k$, the latency decreases and the accuracy increases, this is because a fine grain partitioning can lead to more aggressive data skipping and more efficient sampling.


\begin{table}[ht]
\begin{tabular}{|c|c|c|c|c|}
\hline
$k$ & Cost(s) & Latency(ms) & MaxLatency(ms) & MedianRE \\ \hline
4 & 16 & 14.6 & 29.2 & 0.55\% \\  
8 & 18 & 13 & 26 & 0.32\% \\  
16 & 20 & 11.6 & 23.3 & 0.18\% \\  
32 & 22 & 10.7 & 21.4 & 0.11\% \\  
64 & 25 & 8.9 & 17.8 & 0.04\% \\  
128 & 50 & 6.4 & 12.9 & 0.03\% \\ \hline 
\end{tabular}
\caption{\onethree{Preprocessing cost, mean latency, max latency and accuracy given different number of partitions.}}
\label{tbl:cost}
\end{table}
\fi

\subsection{End-to-End Comparison with Other Systems}
\label{sec:otherbaselines}
\onethree{We run extensive experiments comparing PASS to 
VerdictDB\cite{park2018verdictdb} and DeepDB\cite{hilprecht2019deepdb} on the 3 real datasets. In all of these experiments, we use the BSS mode of PASS and explicitly bound the storage size.}
\onethree{We record the mean cost in terms of query latency, storage and construction/optimization time across different workloads, and we measure the median relative error of each approach on different workloads.}
\onethree{For PASS, we build three variations using storage costs of 0.5MB, 1.4MB and 6MB; for VerdictDB, we use scrambles of a ratio of 10\% and 100\%; for DeepDB, we train models using 10\% and 100\% samples of the datasets.}

\onethree{The results in Table \ref{tab:full-benchmark-verdict-deep} show that VerdictDB-100\% generates overall the most accurate results but its storage is about the same size of the original datasets, and its latency of 842ms --- while 60\% less than MySQL --- is still much higher than PASS and DeepDB. On the other hand, the cost of VerdictDB-10\% is more favorable but the accuracy drops a lot.}
\onethree{DeepDB has the lowest latency among the three, its accuracy on the Intel Wireless dataset and the NYC 1D workload is at the same magnitude as the other two, but much worse on the Instacart dataset and the higher dimensional queries. And we also noticed that increasing the size of the training data and the storage cost of DeepDB does not necessarily improve the results.}
\onethree{The three PASS variations demonstrate a trade-off between the costs and the accuracy: as we increase the storage, the latency increase slightly and the accuracy improves. The overall performance of PASS is slightly worse than VerdictDB-100\% but we believe it is the most favorable approach given the accuracy and the costs.
}

\section{Conclusion}
While it has been proposed in previous work, we found theory around the joint use of precomputation and sampling in synopsis data structures to be limited.
The joint optimization, over both sampling \emph{and} precomputation, is complex because one has to optimize over a combinatorial space of SQL aggregate queries while accounting for the real-valued effects of sampling.
We propose an algorithmic framework that formalizes a connection between pre-computed aggregates and stratified sampling and optimizes over the joint structure.
Our results are very promising, where we see clear accuracy benefits but with the cost of initial data structure construction. 
We further show how to tradeoff cost for accuracy.

As future work, we believe AQP needs to be examined in terms of synopsis construction and maintenance costs. 
If expensive up-front costs can be tolerated then accurate results can be found.
Such a result is related to recent interest in learned models for AQP and cardinality estimation~\cite{hilprecht2019deepdb, park2018verdictdb,yang2019deep}.
While PASS is costly to construct, we believe it is a first step towards a data structure with accuracy guarantees and a variable construction time.
\reviewthree{Furthermore,
while PASS can handle multiple aggregation columns by constructing multiple trees, it
is designed considering that the number of predicate columns $d$ is a small constant (the construction time depends exponentially on $d$). For high dimensions heuristics can help improving the overall complexity, however, an interesting open problem is to efficiently build partitions with provable guarantees when $d$ is large.}

\section*{Acknowledgments}
This work was supported in part by the CERES Center for Unstoppable Computing, a US Army TATRC Research Award, and Intel.

\bibliographystyle{abbrv}
\bibliography{refs}

\newpage
\appendix

\section{Missing Proofs}
\label{apndx:proofs}

\begin{proof}[Proof of Lemma~\ref{lem:Obs}]

Let $q'$ be the query with $error(q',R')=\max_{q\in Q}error(q,R')$, and let $V'=\max_{q\in Q_{R'}^1}V_{i_q}(q)$.
Since $q'$ is a valid query in $R'$, we have
$$error(q',R')=\lambda\sqrt{\sum_{b_i\in R_{q'}'}w_i^2V_i(q')}\leq \left(\sqrt{\sum_{b_i\in R_{q'}'}w_i^2}\right) \left(\lambda \sqrt{V'}\right),$$
where $R_{q'}'$ are the partitions of $R'$ that are partially intersected by $q'$.
Let $R^*$ be the optimum partitioning for Equation \ref{eq:main} and let $V^*=\max_{q\in Q_{R^*}^1}V_{i_{q}}(q)$. We note that $V'\leq V^*$, by definition.
Finally, we have $V^*\leq \max_{q\in Q}error(q,R^*)=\min_{R\in \mathcal{R}}\max_{q\in Q}error(q,R)$,
because $V^*$ corresponds to the variance of a valid query with respect to partitioning $R^*$, and $w_i=1$ for all type of queries if a query is fully contained in one partition. Hence, the maximum error in partitioning $R^*$ will always be at least $\lambda\sqrt{V^*}$.

Overall,
\[\max_{q\in Q}error(q,R')\leq \left(\sqrt{\sum_{b_i\in R_{q'}'}w_i^2}\right) \min_{R\in \mathcal{R}}\max_{q\in Q}error(q,R).\]
For SUM and COUNT queries all weights are equal to $1$. In the worst case $q'$ partially intersects $k$ partitions of partitioning $R'$ so $$\max_{q\in Q}error(q,R')\leq \sqrt{k} \min_{R\in \mathcal{R}}\max_{q\in Q}error(q,R).$$
For AVG queries it holds that $\sum_{b\in R_{q'}'}w_b^2\leq 1$, so 
$$\max_{q\in Q}error(q,R')\leq\min_{R\in \mathcal{R}}\max_{q\in Q}error(q,R),$$ and hence it follows that $$\max_{q\in Q}error(q,R')=\min_{R\in \mathcal{R}}\max_{q\in Q}error(q,R).$$

Finally, notice that if we get a partitioning $\bar{R}$ such that $$\max_{q\in Q_{\bar{R}}^1}V_{i_q}(q)\leq \alpha \cdot \min_{R\in \mathcal{R}}\max_{q\in Q_{R}^1}V_{i_q}(q),$$ for a parameter $\alpha\geq 1$, then it holds that $$\max_{q\in Q}error(q,\bar{R})\leq \sqrt{\alpha\cdot k}\cdot \min_{R\in \mathcal{R}}\max_{q\in Q}error(q,R)$$ for SUM and COUNT queries, and $$\max_{q\in Q}error(q,\bar{R})\leq\sqrt{\alpha}\cdot  \min_{R\in \mathcal{R}}\max_{q\in Q}error(q,R)$$ for AVG queries. 
\end{proof}

\begin{lemma}
\label{lem:COUNT1}
The optimum partitioning for COUNT queries in 1D consists of equal size partitions and can be constructed in near-linear time.
\end{lemma}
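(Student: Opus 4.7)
The plan is to reduce the problem, via Lemma~\ref{lem:Obs}, to a clean bottleneck optimization over partition sizes and then solve it by an equal-partitioning argument. The first step is to specialize the SUM expression for $V_i(q)$ to the COUNT case by substituting $t_h=1$ for every $h\in P_i(q)$. Since both sums collapse to $N_{i,q}$, this yields
\[V_i(q) \;=\; \frac{N_{i,q}\,(N_i - N_{i,q})}{N_i}.\]
By Lemma~\ref{lem:Obs}, it suffices (up to the stated $\sqrt{k}$ factor for COUNT) to minimize $\max_{q\in Q_R^1} V_{i_q}(q)$, i.e., the maximum variance among meaningful queries that fall entirely inside one partition.

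Next I would maximize $V_i(q)$ over meaningful queries contained in a fixed partition $b_i$. In 1D such queries are subintervals of $b_i$, so $N_{i,q}$ can take every integer value in $[\delta N,\,N_i]$. The expression $N_{i,q}(1 - N_{i,q}/N_i)$ is a concave quadratic peaking at $N_{i,q}=N_i/2$, so in the natural regime $\delta N \leq N_i/2$ the inner maximum equals $N_i/4$. The problem therefore collapses to $\min_R \max_i N_i/4$ subject to $\sum_i N_i = N$ with $k$ contiguous pieces. This is the textbook bottleneck load-balancing problem on a line, whose optimum (up to floors/ceilings) is the equal partition with $N_i\in\{\lfloor N/k\rfloor,\lceil N/k\rceil\}$: any deviation would force some $N_j > \lceil N/k\rceil$ and strictly increase the bottleneck $N_j/4$.

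For the algorithm, I would sort the tuples by predicate value in $O(N\log N)$ time and then sweep once to cut the sorted order into $k$ contiguous chunks of size $\lfloor N/k\rfloor$ or $\lceil N/k\rceil$, giving a near-linear total cost. The one subtle step I expect to be the main obstacle is the corner case $N_i<2\delta N$: there the maximum of $V_i$ is attained at the boundary $N_{i,q}=\delta N$ with value $\delta N(1-\delta N/N_i)$, which is monotone decreasing in $N_i$ rather than attaining an interior peak. I would handle this by arguing that shrinking one $N_i$ below $N/k$ necessarily pushes some $N_j$ above $N/k$; since $N_j/4$ grows linearly in $N_j$ while $\delta N(1-\delta N/N_i)$ is bounded above by $\delta N$, the $N_j/4$ term dominates, so equalization remains optimal. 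Alternatively, one can simply assume the natural regime $\delta\leq 1/(2k)$, in which case $N/k\geq 2\delta N$ for the equal partition and the corner case never triggers.
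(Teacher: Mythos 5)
Your reduction through Lemma~\ref{lem:Obs} is where the argument breaks: for COUNT queries that lemma is \emph{lossy} (it only guarantees a $\sqrt{k}$-factor, or $\sqrt{2}$ in 1D, between the surrogate single-partition objective and the true min-max error), so minimizing $\max_{q\in Q_R^1} V_{i_q}(q)$ proves only that equal-size partitions are a $\sqrt{2}$-approximation of the optimum partitioning. The lemma, however, asserts \emph{exact} optimality of the equal partition for the true objective, and your proof never closes that gap. The paper avoids the surrogate entirely: it observes that in 1D a query partially intersects at most two partitions $b_i, b_j$ (any partitions strictly between them are fully covered, hence answered exactly and contribute zero error), so the worst query straddling $b_i$ and $b_j$ takes half the items of each and has error exactly $\frac{\lambda}{2}\sqrt{N_i+N_j}$. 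Exact optimality then follows from a pairwise load-balancing argument: if any partition exceeds $N/k$, the two largest sizes $N_i'\geq N_j'$ satisfy $N_i'+N_j'\geq N_i'+\frac{N-N_i'}{k-1} > 2\frac{N}{k}$, so the maximum error strictly exceeds the equal partition's $\frac{\lambda}{2}\sqrt{2N/k}$.

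The good news is that your ingredients are the right ones and the fix is local. Your concavity computation, $V_i(q)=\frac{N_{i,q}(N_i-N_{i,q})}{N_i}$ peaking at $N_{i,q}=N_i/2$ with value $N_i/4$, is exactly the paper's first step, and your bottleneck argument on $\max_i N_i$ just needs to be upgraded to the pairwise quantity $\max_{i,j}(N_i+N_j)$ over the two largest parts, using the zero-error contribution of fully covered partitions to justify that straddling queries with error $\frac{\lambda}{2}\sqrt{N_i+N_j}$ are realizable (no adjacency needed). Your treatment of the $N_i<2\delta N$ corner case is extra care the paper's proof does not spell out, and your sort-and-sweep construction matches the paper's $\mathcal{O}(N\log N)$ algorithm. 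But as written, the proposal proves a strictly weaker statement than the lemma claims.
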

\begin{proof}
The error of a COUNT query $q$ in a bucket $b_i$ is defined as
$\lambda \sqrt{\frac{1}{N_i}V_i(q)}$ where $V_i(q)=[N_i\sum_{h\in P_i(q)} 1 - (\sum_{h\in P_i(q)} 1)^2]$.
We show that the query with the maximum error in a partition $b_i$ is a query that contains exactly $N_i/2$ items.
Let $X=\sum_{h\in P_i(q)} 1$. Then we have $V_i(q)=N_i\cdot X-X^2$. It is straightforward that the function $f(X)=N_i\cdot X-X^2$ is maximized for $X=N_i/2$ and the maximum error is $\frac{\lambda}{2}\sqrt{N_i}$. A query in 1D can partially intersect at most two partitions $b_i, b_j$. From the analysis above it follows that the maximum error of a query (partially) intersecting $b_i, b_j$ is $\frac{\lambda}{2}\sqrt{N_i+N_j}$ since there is always a query that contains half of items in $b_i$ and half of items in $b_j$. We claim that the optimum partitioning consists of equal size partitions $N/k$ where the maximum error is $\frac{\lambda}{2}\sqrt{2\frac{N}{k}}$. Consider any other partitioning $R$ that contains (at least) a partition with more than $N/k$ items. Let $N_i'\geq N_j'$ be the number of items in the partitions in $R$ with the maximum number of items. From the definition we have $N_i'>N/k$. Then, $N_i'+N_j'\geq N_i'+\frac{N-N_i'}{k-1}=\frac{k-2}{k-1}N_i'+\frac{N}{k-1}>\frac{k-2}{k(k-1)}N+\frac{N}{k-1}=\frac{N}{k-1}(\frac{k-2}{k}+1)= 2\frac{N}{k}$. Hence the maximum error in partitioning $R$ is $\frac{\lambda}{2}\sqrt{N_i'+N_j'}>\frac{\lambda}{2}\sqrt{2\frac{N}{k}}$.

In order to construct a partitioning with equal size partitions we sort all items and we create partitions of size $N/k$. The algorithm runs in $\mathcal{O}(N\log N)$ time.
\end{proof}

\subsection{Bounding the ratio of $N_i/n_i$} 
\label{sec:ratio}

Let $S_i(q)$ be the set of sampled items in $q$ that lie in partition $b_i$.
The error of a SUM (or COUNT) query $q$ in a partition $b_i$ is defined as $\lambda \sqrt{\frac{N_i^2}{n_i^3}\left[n_i\sum_{h\in S_i(q)}t_h^2 - (\sum_{h\in S_i(q)} t_h)^2\right]}$. In order to derive the faster approximate dynamic program we assume that the ratio $\frac{N_i^2}{n_i^2}$ is roughly the same for all partitions we check. In particular, we would like to argue that there exists a lower bound $\mathcal{L}$ and a small parameter $\beta$ such that $\mathcal{L}\leq \frac{N_i^2}{n_i^2}\leq \beta\mathcal{L}$ for all possible partitions with high probability. Notice that if this property does not hold then our algorithms still return the correct result but the running time is quadratic on $m$ (i.e., the number of samples).

We show how to bound the ratio $\frac{N_i^2}{n_i^2}$ for any possible partition the partitioning algorithm considers in any constant dimension $d$. Let $b_i$ be a rectangular partition. For $1\leq j\leq m$ let $X_j$ be random variable which is $1$ if the $j$-th sample lies in $b_i$ and $0$ otherwise. We have that $n_i=\sum_j X_j$ and let $\mu_i=E[\sum_{j}X_j]=m\frac{N_i}{N}$. From Chernoff bound we have that $Pr[(1-\gamma_i)\mu_i\leq n_i\leq (1+\gamma_i)\mu_i]\geq 1-2e^{-\gamma_i^2\mu_i/3}$, for $\gamma_i\in(0,1)$. We set $\gamma_i=\sqrt{3\frac{N}{mN_i}\ln (2m^{2d+1})}$ and we get $Pr[(1-\gamma_i)\mu_i\leq n_i\leq (1+\gamma_i)\mu_i]\geq 1-\frac{1}{m^{2d+1}}$.
There are at most $m^{2d}$ different rectangular partitions so using the union bound we get that $Pr[(1-\gamma_i)\mu_i\leq n_i\leq (1+\gamma_i)\mu_i]\geq 1-\frac{1}{m}$ for any possible valid partition $b_i$.

Next, we bound each parameter $\gamma_i$ and ratio $\frac{N_i^2}{n_i^2}$. For each partition $b_i$ we set $\gamma_i\leq 1/2$ which implies that $\sqrt{3(2d+1)\frac{N}{mN_i}\ln(2m)}\leq 1/2 \Leftrightarrow N_i\geq 12(2d+1)\frac{N}{m}\ln (2m)$. Assuming that the sample rate is $a$, i.e., $m=a\cdot N$, we get $N_i=\Omega(\frac{1}{a}\log m)$. Hence, if our partitioning algorithm considers large enough partitions (with size $\Omega(\frac{1}{a} \log m)$) we get that for all valid partitions it holds $\frac{1}{2}m\frac{N_i}{N}\leq n_i\leq \frac{3}{2}m\frac{N_i}{N}$ with probability at least $1-\frac{1}{m}$. Equivalently, if $\mathcal{L}=\frac{4}{9}\frac{N^2}{m^2}$ then $\mathcal{L}\leq \frac{N_i^2}{n_i^2}\leq 9\mathcal{L}$ for any possible large enough partition with probability at least $1-\frac{1}{m}$.

From the discussion above we can consider that the ratio $\frac{N_i^2}{n_i^2}$ is the same for all possible partitions for SUM (or COUNT when $d>1$) queries. In the worst case, with high probability, the variance will be off by a constant factor $9$. Hence, if we have an algorithm (considering the same $\frac{N_i^2}{n_i^2}$ for all possible partitions) that finds a partitioning with error at most $\alpha$ times the optimum, then the real approximation factor is $\sqrt{9}\alpha=3\alpha$, with high probability. For simplicity, we do not include the $\sqrt{9}$ factor in the approximation ratio when we state the results, however we mention it when we get the overall approximation factor.

Finally, the error for an AVG query $q$ completely inside a partition $b_i$ is defined as $\lambda\sqrt{\frac{1}{n_i\cdot |q|}\left[n_i\sum_{h\in S_i(q)}t_h^2 - (\sum_{h\in S_i(q)} t_h)^2\right]}$ (there is no dependency on $N_i$) so the sub-quadratic dynamic programming algorithm is correct in any case.

\subsection{Approximating the maximum variance in a partition}
In this section, we propose efficient algorithms for computing an approximation of the maximum variance query in an interval for $d=1$ or a rectangle for $d>1$ for SUM, COUNT and AVG queries. We consider that the values of the items are unbounded, i.e., they can be larger than $2^N$ or smaller than $1/2^N$, otherwise we can possibly get faster algorithms.

As we showed above the error of a query $q$ that lies completely inside a partition $b_i$ is defined as $$\lambda\sqrt{V_i(q)},$$
where
$$V_i(q)=\frac{N_i^2}{n_i^3}\left[n_i\sum_{h\in S_i(q)}t_h^2 - \left(\sum_{h\in S_i(q)}t_h\right)^2\right]$$
for COUNT and SUM queries and
$$V_i(q)=\frac{1}{n_i\cdot |q|^2}\left[n_i\sum_{h\in S_i(q)}t_h^2 - \left(\sum_{h\in S_i(q)}t_h\right)^2\right]$$
for AVG queries.
Since it is common in all queries we define $$\mathcal{V}_i(q)=n_i\sum_{h\in S_i(q)}t_h^2 - \left(\sum_{h\in S_i(q)}t_h\right)^2.$$
Given a partition $b_i$ (interval for $d=1$ or rectangle for $d>1$) the ratio $\frac{N_i^2}{n_i^3}$ or $\frac{1}{n_i}$ is the same for any possible query $q$ in $b_i$. Hence, 
in order to find the query with the maximum variance in a partition it is safe to define $V_i(q)=\mathcal{V}_i(q)$ for COUNT and SUM queries and $V_i(q)=\frac{1}{ |q|^2}\mathcal{V}_i(q)$ for AVG queries. It will always be clear from the context if we consider COUNT, SUM, or AVG queries. In the next sections, we also use the notation $S_i(q)$ for the set of samples in partition $b_i$ in query $q$. Finally, if $q$ is completely inside a partition $b_i$ we define $|q|=|S_i(q)|$.

Before we start we show a useful lemma that we are going to use later in all types of queries.

\begin{lemma}
\label{lem1}
Let $q$ be a query completely inside a partition $b_i$, where $|q|\leq \alpha n_i$, for a parameter $\alpha\leq 1$. Then it holds that $\mathcal{V}_i(q) \geq (1-\alpha)n_i \sum_{h\in S_i(q)}t_h^2$.
\end{lemma}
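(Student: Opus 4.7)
The plan is to prove the lemma by a direct application of the Cauchy–Schwarz inequality. Recall from the setup that
\[
\mathcal{V}_i(q) = n_i \sum_{h\in S_i(q)} t_h^2 - \Bigl(\sum_{h\in S_i(q)} t_h\Bigr)^2,
\]
and that $|q| = |S_i(q)|$ when $q$ lies entirely inside $b_i$. So the claim $\mathcal{V}_i(q) \geq (1-\alpha)\, n_i \sum_{h\in S_i(q)} t_h^2$ is equivalent to the bound
\[
\Bigl(\sum_{h\in S_i(q)} t_h\Bigr)^2 \leq \alpha\, n_i \sum_{h\in S_i(q)} t_h^2.
\]

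First, I would apply Cauchy–Schwarz with the vector $(t_h)_{h\in S_i(q)}$ against the all-ones vector on $S_i(q)$, which gives
\[
\Bigl(\sum_{h\in S_i(q)} t_h\Bigr)^2 \leq |S_i(q)| \cdot \sum_{h\in S_i(q)} t_h^2 = |q| \cdot \sum_{h\in S_i(q)} t_h^2.
\]
Then I would use the hypothesis $|q| \leq \alpha n_i$ to upper-bound the right-hand side by $\alpha\, n_i \sum_{h\in S_i(q)} t_h^2$. Substituting back into the definition of $\mathcal{V}_i(q)$ yields
\[
\mathcal{V}_i(q) \geq n_i \sum_{h\in S_i(q)} t_h^2 - \alpha\, n_i \sum_{h\in S_i(q)} t_h^2 = (1-\alpha)\, n_i \sum_{h\in S_i(q)} t_h^2,
\]
which is exactly the desired inequality.

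There is essentially no obstacle here: the statement is a straightforward lower bound on a variance-like quantity, and the Cauchy–Schwarz step is tight enough because we only need the trivial bound $\bigl(\sum t_h\bigr)^2 \leq |q| \sum t_h^2$ rather than any finer control. The only subtle point worth flagging in the write-up is that the $t_h$'s should be treated as real numbers (the inequality would also hold for nonnegative values, which is the assumption used elsewhere in the paper), and that $|q| = |S_i(q)|$ under the convention established earlier in this appendix section, so the hypothesis $|q| \leq \alpha n_i$ really does translate into a bound on $|S_i(q)|$.
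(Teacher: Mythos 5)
Your proof is correct and is essentially the paper's argument: both reduce the lemma to the single inequality $\left(\sum_{h\in S_i(q)} t_h\right)^2 \leq |q| \sum_{h\in S_i(q)} t_h^2$ followed by $|q|\leq \alpha n_i$. The only difference is cosmetic---you invoke Cauchy--Schwarz by name, while the paper proves that same step by hand via the Lagrange-type expansion $|q|\sum_{h} t_h^2 - \left(\sum_{h} t_h\right)^2 = \sum_{w<j\in S_i(q)}(t_w - t_j)^2 \geq 0$.
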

\begin{proof}
We have,
\begin{align*}
\mathcal{V}_i(q)&=n_i\sum_{h\in S_i(q)}t_h^2 - \left(\sum_{h\in S_i(q)}t_h\right)^2\\
&= (n_i-1)\sum_{h\in S_i(q)}t_h^2 - 2\sum_{w<j\in S_i(q)}t_wt_j\\
&= (n_i-|q|)\!\sum_{h\in S_i(q)}\!t_h^2 + (|q|-1)\!\sum_{h\in S_i(q)}\!t_h^2- 2\!\sum_{w<j\in S_i(q)}\!t_wt_j\\
&= (n_i-|q|)\sum_{h\in S_i(q)}t_h^2 + \sum_{w<j\in S_i(q)} (t_w-t_j)^2\\
&\geq (n_i-|q|)\sum_{h\in S_i(q)}t_h^2\\
&\geq (1-\alpha)n_i\sum_{h\in S_i(q)}t_h^2.
\end{align*}
\end{proof}

\renewcommand{\O}{\widetilde{O}}
\subsection{SUM and COUNT queries}
\label{appndx:SUM}
The key idea for finding the maximum variance query in a partition $b_i$ for SUM and COUNT queries is the same for any constant dimension $d$. Hence, we do not distinguish between $d=1$ and $d>1$. So we consider that $d$ is any constant.

Before we describe our algorithm we execute some preprocessing so we can compute efficiently $\sum_{h\in S_i(q)}t_h^2$ or $\sum_{h\in S_i(q)}t_h$ for a rectangular query $q$. In $1$-dimension we sort all samples and We compute the prefix sums $X[j]=\sum_{h\leq j}t_h^2$ and $Y[j]=\sum_{h\leq j}t_h$. In higher dimensions we construct a range tree~\cite{de1997computational} in $\mathcal{O}(m\log^{d-1} m)$ time that has space $\mathcal{O}(m \log^{d-1} m)$. Given a query rectangle $q$ the range tree can return $\sum_{h\in S_i(q)}t_h^2$ and $\sum_{h\in S_i(q)}t_h$ in $\mathcal{O}(\log^{d-1} m)$ time.

\mparagraph{Algorithm}
Let $b_i$ be a rectangle in $\Re^d$. By finding the median item (in any dimension), we split $b_i$ into two queries $q_1, q_2$ such that $q_1\cup q_2=b_i$, $S_i(q_1)\cap S_i(q_2)=\emptyset$ and $|S_i(q_1)|=|S_i(q_2)|=n_i/2$. We return $\max\{V_i(q_1), V_i(q_2)\}$.


\mparagraph{Running time}
After the preprocessing our algorithm runs in $\mathcal{O}(1)$ time for $d=1$ and $\mathcal{O}(\log^d n)$ for any constant $d>1$.

\mparagraph{Correctness}
We show the following lemma.
\begin{lemma}
\label{proof1}
Without loss of generality, let $V_i(q_1)$ be the variance that our algorithm returns and let $V_{q^*}$ be the maximum variance in partition $b_i$. Then $V_i(q_1)\geq \frac{1}{4}V_i(q^*)$.
\end{lemma}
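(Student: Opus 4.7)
The plan is to derive the $1/4$ factor by combining two easy bounds: a lower bound on each $\mathcal{V}_i(q_j)$ coming from Lemma~\ref{lem1}, and a trivial upper bound on $\mathcal{V}_i(q^*)$. The numerical choice $\alpha = 1/2$ in Lemma~\ref{lem1} is precisely what the algorithm enables, since splitting at the median guarantees $|S_i(q_j)| = n_i/2$.

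First I would apply Lemma~\ref{lem1} to each of $q_1, q_2$ with $\alpha = 1/2$. Since $|q_j| = n_i/2$, this gives
\[
\mathcal{V}_i(q_j) \;\geq\; \tfrac{n_i}{2}\sum_{h\in S_i(q_j)} t_h^2, \qquad j=1,2.
\]
Next, since $S_i(q_1)$ and $S_i(q_2)$ partition $S_i(b_i)$, I have $\sum_{h\in S_i(q_1)} t_h^2 + \sum_{h\in S_i(q_2)} t_h^2 = \sum_{h\in S_i(b_i)} t_h^2$, so whichever of the two is larger satisfies
\[
\max\{\mathcal{V}_i(q_1),\mathcal{V}_i(q_2)\} \;\geq\; \tfrac{n_i}{4}\sum_{h\in S_i(b_i)} t_h^2.
\]

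Then I would bound $\mathcal{V}_i(q^*)$ from above in the most naive way, namely by dropping the non-positive $-(\sum_{h\in S_i(q^*)} t_h)^2$ term and enlarging the domain of summation from $S_i(q^*)$ to $S_i(b_i)$:
\[
\mathcal{V}_i(q^*) \;\leq\; n_i \sum_{h\in S_i(q^*)} t_h^2 \;\leq\; n_i\sum_{h\in S_i(b_i)} t_h^2.
\]
Combining the last two displayed inequalities gives $\max\{\mathcal{V}_i(q_1),\mathcal{V}_i(q_2)\} \geq \tfrac{1}{4}\mathcal{V}_i(q^*)$. Because for SUM and COUNT queries in a fixed partition the prefactor $N_i^2/n_i^3$ relating $V_i$ to $\mathcal{V}_i$ is the same for $q_1$, $q_2$ and $q^*$, the same $1/4$ ratio transfers to $V_i$, which is the stated conclusion.

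The only substantive step is the first one: recognizing that choosing the median split lets us invoke Lemma~\ref{lem1} with $\alpha=1/2$ uniformly for both halves. Everything else is bookkeeping — an averaging argument over the two pieces of the partition and the trivial upper bound $\mathcal{V}_i(q^*)\le n_i\sum t_h^2$. I do not anticipate any real obstacle; the main thing to check carefully is that the $d>1$ case behaves identically, which it does because the algorithm's split is a coordinate-aligned hyperplane through the median and all bounds used depend only on $|S_i(\cdot)|$ and $\sum t_h^2$, not on the geometry of the rectangle.
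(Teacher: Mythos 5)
Your proof is correct and follows essentially the same route as the paper's: the same median split, the same invocation of Lemma~\ref{lem1} with $\alpha=\tfrac{1}{2}$, and the same $\tfrac{1}{2}\cdot\tfrac{1}{2}$ accounting that yields the factor $\tfrac{1}{4}$, with the $N_i^2/n_i^3$ prefactor cancelling as you note. The only difference is a minor streamlining: where the paper compares $\sum_{h\in S_i(q_j)}t_h^2$ to $\sum_{h\in S_i(q^*)}t_h^2$ via a case analysis on whether $q^*$ lies inside one half or straddles both, you route both quantities through the total $\sum_{h\in S_i(b_i)}t_h^2$, which eliminates the case analysis and in fact gives the slightly stronger conclusion $\max\{\mathcal{V}_i(q_1),\mathcal{V}_i(q_2)\}\geq \tfrac{n_i}{4}\sum_{h\in S_i(b_i)}t_h^2$.
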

\begin{proof}
Let $q_j$ (for $j=1$ or $j=2$) be the query such that
$j=\arg\max_{v=1, 2} \sum_{h\in S_i(q_v)}t_h^2$.
We have that $V_i(q_1)\geq V_i(q_j)$.
Furthermore, $|S_i(q_j)|=n_i/2$ so \footnote{For simplicity we assume that $n_i$ is even. The results also hold when $n_i$ is odd.} from Lemma~\ref{lem1} we have $V_i(q_j)=n_i \sum_{h\in S_i(q_j)}t_h^2 - (\sum_{h\in S_i(q_j)}t_h)^2\geq \frac{n_i}{2}\sum_{h\in S_i(q_j)}t_h^2$.
In addition, notice that $\sum_{h\in S_i(q_j)}t_h^2\geq \frac{1}{2}\sum_{h\in S_i(q^*)}t_h^2$. This holds because: If $q^*\subseteq q_1$ or $q^*\subseteq q_2$ then $\sum_{h\in S_i(q_j)}t_h^2\geq \sum_{h\in S_i(q^*)}t_h^2$. If $q^*$ intersects both of $q_1$ and $q_2$ then by definition $\sum_{h\in S_i(q_j)}t_h^2\geq \frac{1}{2}\sum_{h\in S_i(q^*)}t_h^2$.
Overall we have,
\begin{align*}
V_i(q_1)&\geq V_i(q_j)\\
&=n_i\sum_{h\in S_i(q_j)}t_h^2 - \left(\sum_{h\in S_i(q_j)}t_h\right)^2\\
&\geq \frac{n_i}{2}\sum_{h\in S_i(q_j)}t_h^2\\
&\geq \frac{n_i}{4}\sum_{h\in S_i(q^*)}t_h^2\\
&\geq \frac{1}{4}\left[n_i\sum_{h\in S_i(q^*)}t_h^2 - \left(\sum_{h\in S_i(q^*)}t_h\right)^2\right]\\
&=\frac{1}{4}V_i(q^*).
\end{align*}
\end{proof}

\subsection{AVG queries}
\label{appndx:AVG}
We first show an important property for AVG queries.
\begin{lemma}
\label{lemAVG}
The AVG query with the largest variance $q$ in a partition $b_i$ contains less than $2\delta m$ sampled items, i.e., $|S_i(q)|< 2\delta m$.
\end{lemma}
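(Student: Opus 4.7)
My plan is to prove this by contradiction. Assume the AVG query $q$ that maximizes $V_i(q)$ inside partition $b_i$ has $|S_i(q)| \geq 2\delta m$. I will construct a proper sub-query of $q$ that is still meaningful but has strictly larger variance, yielding a contradiction.

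First, I would split $q$ into two disjoint sub-queries $q_1, q_2$ with $S_i(q_1)\cup S_i(q_2)=S_i(q)$ and $|S_i(q_1)|, |S_i(q_2)|\in \{\lfloor |S_i(q)|/2\rfloor, \lceil |S_i(q)|/2\rceil\}$. In 1D this is done by cutting the interval at an internal sample; more generally by splitting the rectangle along a coordinate axis at the sample median, so that the two pieces are again valid rectangular queries. Because $|S_i(q)|\geq 2\delta m$, both halves contain at least $\delta m$ samples, so both remain meaningful queries.

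The algebraic heart of the proof is the identity
\[\mathcal{V}_i(q_1) + \mathcal{V}_i(q_2) = \mathcal{V}_i(q) + 2CD,\]
with $C = \sum_{h\in S_i(q_1)} t_h$ and $D = \sum_{h\in S_i(q_2)} t_h$, obtained by expanding $(C+D)^2$ in the definition $\mathcal{V}_i(q) = n_i \sum_{h\in S_i(q)} t_h^2 - (\sum_{h\in S_i(q)} t_h)^2$ together with $\sum_{h\in S_i(q)} t_h^2 = \sum_{h\in S_i(q_1)} t_h^2 + \sum_{h\in S_i(q_2)} t_h^2$. The paper's standing assumption that all $t_h$ are positive forces $C,D>0$, so $\max_j \mathcal{V}_i(q_j) > \mathcal{V}_i(q)/2$. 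Using $|q_j|=|S_i(q_j)|\leq \lceil |q|/2\rceil$ and the definition $V_i(q)=\mathcal{V}_i(q)/|q|^2$ for AVG queries, a direct calculation yields
\[V_i(q_j) \;=\; \frac{\mathcal{V}_i(q_j)}{|q_j|^2} \;>\; \frac{\mathcal{V}_i(q)/2}{(|q|/2)^2} \;=\; 2\,V_i(q),\]
contradicting the maximality of $V_i(q)$ and establishing the desired bound.

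The main subtleties I anticipate are (i) making sure both halves clear the $\delta m$ threshold when $|S_i(q)|$ is odd, which follows from a floor/ceiling bookkeeping argument (and for the very smallest values of $|S_i(q)|$ reduces to a short case check), and (ii) justifying that the sample-level meaningfulness $|S_i(q)|\geq\delta m$ faithfully reflects the original $N_{i,q}\geq \delta N$ assumption, which is exactly what the concentration argument in Section~\ref{sec:ratio} provides. Neither is conceptually hard; beyond these technicalities the proof is a one-shot contradiction relying on the crucial positivity of the aggregated values.
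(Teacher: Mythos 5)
Your proof is correct, and its skeleton matches the paper's proof of Lemma~\ref{lemAVG}: both argue by contradiction, split the putative maximizer $q$ with $|S_i(q)|\geq 2\delta m$ at the sample median into disjoint halves $q_1,q_2$ that each remain meaningful, and show that a half must beat $q$, contradicting maximality. The inner mechanism, however, is genuinely different. The paper selects the half with the larger sum of squares (so $\sum_{h\in S_i(q_1)}t_h^2\geq \frac{1}{2}\sum_{h\in S_i(q)}t_h^2$), assumes $V_i(q)>V_i(q_1)$, i.e.\ $\mathcal{V}_i(q)>4\mathcal{V}_i(q_1)$, and invokes Lemma~\ref{lem1} with $\alpha=1/2$ to get $4\mathcal{V}_i(q_1)\geq 2n_i\sum_{h\in S_i(q_1)}t_h^2\geq n_i\sum_{h\in S_i(q)}t_h^2$; this is impossible because $\mathcal{V}_i(q)$ equals that quantity minus the strictly positive $\left(\sum_{h\in S_i(q)}t_h\right)^2$. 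You instead use the exact identity $\mathcal{V}_i(q_1)+\mathcal{V}_i(q_2)=\mathcal{V}_i(q)+2CD$, with positivity of the values entering through the cross term $CD>0$, so $\max_j\mathcal{V}_i(q_j)>\mathcal{V}_i(q)/2$ and the $1/|q_j|^2$ normalization finishes the job. Your route is self-contained (no appeal to Lemma~\ref{lem1}), makes the role of the positivity footnote transparent, and in the even-split case yields the quantitatively stronger conclusion $V_i(q_j)>2V_i(q)$; the paper's route buys economy, since Lemma~\ref{lem1} is the shared engine behind the SUM/COUNT approximation (Lemma~\ref{proof1}) and the AVG index analysis (Lemma~\ref{lem:AVG}), so reusing it keeps the appendix toolkit small. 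The two subtleties you flag are real but harmless: for odd $|S_i(q)|$ your displayed factor $2$ degrades to $2|q|^2/(|q|+1)^2>1$, which still contradicts maximality for $|q|\geq 3$ (the paper silently assumes an even split as well), and the translation between the sample-level threshold $\delta m$ and the population-level $N_{i,q}\geq\delta N$ is exactly the concentration argument of Appendix~\ref{sec:ratio}, which the paper likewise takes as given at this point.
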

\begin{proof}
We show it by contradiction. Assume that this is not the case so $S_i(q)\geq 2\delta m$. Let $q_1, q_2$ be queries with $|S_i(q_1)|=|S_i(q_2)|=|S_i(q)|/2\leq n_i/2$, where $q=q_1\cup q_2$ and $q_1\cap q_2=\emptyset$. Without loss of generality $\sum_{h\in S_i(q_1)}t_h^2\geq \sum_{h\in S_i(q_2)}t_h^2$ so we have that $\sum_{h\in S_i(q_1)}t_h^2\geq \frac{1}{2}\sum_{h\in S_i(q)}t_h^2$.

Since $q$ is the query with the largest variance we have $$V_i(q)>V_i(q_1)\Leftrightarrow\frac{1}{|q|^2}\mathcal{V}_i(q)> \frac{1}{|q_1|^2}\mathcal{V}_i(q_1)\Leftrightarrow \mathcal{V}_i(q)> 4\mathcal{V}_i(q_1).$$ Since $|S_i(q_1)|\leq n_i/2$, from Lemma~\ref{lem1} we have that $4\mathcal{V}_i(q_1)=4[n_i\sum_{h\in S_i(q_1)}t_h^2 - (\sum_{h\in S_i(q_1)}t_h)^2]\geq 2n_i\sum_{h\in S_i(q_1)}t_h^2$. Recall that $\sum_{h\in S_i(q_1)}t_h^2\geq \frac{1}{2}\sum_{h\in S_i(q)}t_h^2$ so we conclude that $n_i\sum_{h\in S_i(q)}t_h^2 - (\sum_{h\in S_i(q)}t_h)^2>\sum_{h\in S_i(q)}t_h^2$, which is a contradiction.
\end{proof}

Using the result of Lemma~\ref{lemAVG} we can only focus on queries that contain at most $2\delta m$ items. We design two algorithms; the first one works in $1$-dimension and the second one works in any constant dimension $d$.

All our algorithms for AVG are correct assuming that the number of sampled items in a partition $b_i$ is at least $2\delta m$, i.e., $n_i\geq 2\delta m$. Otherwise we can assume that the maximum variance is $0$ because of the small number of samples in a partition.


\mparagraph{Algorithm for $d=1$}
We construct a binary search tree $T$ over the sampled items with respect to their constraint attribute. Given an interval $[l,r]$ we can use the tree $T$ to find a partition of the sampled items in $[l,r]$ with $\mathcal{O}(\log m)$ subsets, in $\mathcal{O}(\log m)$ time. Let $u$ be a node of $T$ and $T_u$ be the set of sampled items in the area (interval) of $u$ and let $n_u=|T_u|$. For a node $u$ we store the sample $t_g\in T_u$ with the largest value $\sum_{g-\delta n+1\leq h\leq g}t_h^2$. In other words, $t_g$ is the sample such that the query that contains exactly $\delta m$ samples with right endpoint $t_g$ has the largest sum of squares of the samples' values among all samples in $T_u$.

All samples with the largest values in each node of $T$ can be computed in $\mathcal{O}(m)$ time in a bottom-up way (assuming that the samples are sorted after the pre-computation).
Overall, the index we construct has $\mathcal{O}(m)$ space and can be constructed in $\mathcal{O}(m\log m)$ time.

Assume that we are given a partition $b_i=[l,r]$ in 1D and we want to find the query with the maximum variance in it. We run a search on $T$ using the interval $[l+\delta m-1,r]$. Let $\mathcal{U}$ be the set of $\mathcal{O}(\log m)$ canonical nodes of $T$ that cover all sampled items in $[l+\delta m-1,r]$.
We get the sample $t_g$ with the largest value $\sum_{g-\delta m+1\leq h\leq g}t_h^2$ among all nodes in $\mathcal{U}$.
We return the query $q'=[t_{g-\delta m+1}, t_{g}]$ with variance $V_i(q')$.

\paragraph{Running time}
After the preprocessing phase in  $\mathcal{O}(m\log m)$ time, given a partition $b_i$ we return $q'$ in $\mathcal{O}(\log m)$ time.

\paragraph{Correctness}
We show the next lemma.
\begin{lemma}
\label{lem:AVG}
Let $q^*$ be the query with the largest variance in partition $b_i$ and let $q'$ be the query returned by our algorithm. It holds that $V_i(q')\geq \frac{1}{4}V_i(q^*)$.
\end{lemma}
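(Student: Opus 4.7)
}
My plan is to exploit the two key facts already established: the meaningful-query assumption gives $|S_i(q^*)| \geq \delta m$, and Lemma~\ref{lemAVG} gives $|S_i(q^*)| < 2\delta m$. So $q^*$ is a consecutive window of sampled items whose length lies in the tight range $[\delta m, 2\delta m)$. The algorithm, on the other hand, outputs the length-$\delta m$ window $q'$ with the maximum value of $\sum_{h\in S_i(q')} t_h^2$. The overall goal is to show that $q'$ captures at least a constant fraction of $V_i(q^*)$.

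The first main step is to cover $q^*$ by two windows of exact length $\delta m$. Concretely, let $q_1$ consist of the leftmost $\delta m$ samples of $S_i(q^*)$ and $q_2$ the rightmost $\delta m$ samples. Since $\delta m \leq |S_i(q^*)| < 2\delta m$, we have $S_i(q^*) = S_i(q_1) \cup S_i(q_2)$ (with possible overlap), which implies
\[
\sum_{h\in S_i(q^*)} t_h^2 \leq \sum_{h\in S_i(q_1)} t_h^2 + \sum_{h\in S_i(q_2)} t_h^2.
\]
By pigeonhole, one of the two windows, say $q_j$, satisfies $\sum_{h\in S_i(q_j)} t_h^2 \geq \tfrac{1}{2}\sum_{h\in S_i(q^*)} t_h^2$. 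Since $q'$ maximizes the sum of squares over \emph{all} length-$\delta m$ windows, the tree lookup guarantees $\sum_{h\in S_i(q')} t_h^2 \geq \sum_{h\in S_i(q_j)} t_h^2$.

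The second step translates this sum-of-squares bound into a variance bound using Lemma~\ref{lem1}. Since we assumed $n_i \geq 2\delta m$, we have $|S_i(q')| = \delta m \leq n_i/2$, so Lemma~\ref{lem1} with $\alpha = 1/2$ gives $\mathcal{V}_i(q') \geq \tfrac{n_i}{2} \sum_{h\in S_i(q')} t_h^2$. Combining with the previous step,
\[
\mathcal{V}_i(q') \;\geq\; \tfrac{n_i}{4} \sum_{h\in S_i(q^*)} t_h^2 \;\geq\; \tfrac{1}{4}\,\mathcal{V}_i(q^*),
\]
where the last inequality uses the trivial upper bound $\mathcal{V}_i(q^*) \leq n_i \sum_{h\in S_i(q^*)} t_h^2$.

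The final step handles the $1/|q|^2$ normalization that distinguishes $V_i$ from $\mathcal{V}_i$ for AVG. Dividing by the respective window lengths yields $V_i(q') = \mathcal{V}_i(q')/(\delta m)^2$ and $V_i(q^*) = \mathcal{V}_i(q^*)/|S_i(q^*)|^2$. Since $|S_i(q^*)| \geq \delta m$, the factor $|S_i(q^*)|^2/(\delta m)^2 \geq 1$ only helps $q'$, and we conclude $V_i(q') \geq \tfrac{1}{4} V_i(q^*)$. The main subtlety I expect is the one I handled above: because $|q^*|$ and $|q'|$ differ and the variance is scaled by $1/|q|^2$, one must be careful not to lose a factor of $4$ from the normalization; luckily the meaningful-query lower bound $|S_i(q^*)| \geq \delta m$ exactly cancels this potential loss.
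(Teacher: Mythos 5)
Your proof is correct and follows essentially the same route as the paper's: cover $q^*$ by two length-$\delta m$ windows and apply pigeonhole (losing a factor $2$), invoke Lemma~\ref{lem1} with $|S_i(q')| = \delta m \leq n_i/2$ (losing another factor $2$), and absorb the $1/|q|^2$ normalization via $|S_i(q^*)| \geq \delta m = |S_i(q')|$. The only cosmetic difference is that you place the covering windows inside $q^*$ while the paper places them inside $b_i$; both are valid and yield the identical factor-$4$ bound.
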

\begin{proof}
Our index finds the query $q'$ with the largest value $\sum_{h\in S_i(q')}t_h^2$ among all queries in $b_i$ with length $\delta m$.
Since $n_i\geq 2\delta m$ we have that $|S_i(q')|\leq n_i/2$ so from Lemma~\ref{lem1}
$\mathcal{V}_i(q')\geq \frac{n_i}{2}\sum_{h\in S_i(q')}t_h^2$.
In addition, notice that $\sum_{h\in S_i(q')}t_h^2\geq \frac{1}{2}\sum_{h\in S_i(q^*)}t_h^2$, since $|S_i(q^*)|<2\delta m$ and there exist two queries (intervals) with length $\delta m$ completely inside $b_i$ that cover $q^*$. Hence, similarly to Lemma~\ref{proof1} we get:
\begin{align*}
V_i(q')&=\frac{1}{|q'|^2}{\mathcal{V}_i(q')}\\
&=\frac{1}{|q'|^2}\left[n_i\sum_{h\in S_i(q')}t_h^2 - \left(\sum_{h\in S_i(q')}t_h\right)^2\right]\\
&\geq \frac{1}{|q'|^2}\frac{n_i}{2}\sum_{h\in S_i(q')}t_h^2\\
&\geq \frac{1}{|q'|^2}\frac{1}{4}n_i\sum_{h\in S_i(q^*)}t_h^2\\
&\geq \frac{|q^*|^2}{4|q'|^2}\frac{1}{|q^*|^2}\left[n_i\sum_{h\in S_i(q^*)}t_h^2 - \left(\sum_{h\in S_i(q^*)}t_h\right)^2\right]\\
&\geq \frac{1}{4}V_i(q^*).
\end{align*}
\end{proof}

\mparagraph{Algorithm for $d>1$}
We show two algorithms on finding the maximum variance AVG query on a rectangular partition $b_i$.

\paragraph{First algorithm}
For simplicity we describe the algorithm in $2$-dimensions, however it can be generalized to any dimension.
For preprocessing we construct a range tree over the sampled points.
Let $b_i$ be a rectangle in $2$ dimensions.
We project all sampled points in $b_i$ on the $y$ axis and we sort them with respect to $y$.
We also sort the points with respect to the $x$ coordinate. Let $t.x$ and $t.y$ be the $x$ and $y$ coordinate of sample $t$, respectively.
For each sample $t$ in $b_i$ we do the following: Let $w=2^i$ for integer $i=0,\ldots, \log m$. Let $t'$ be the $w$-th point below $t$ with respect to the $y$ coordinate. We run a binary search with respect to the $x$ axis to the right side of $t.x$ to find the item $\hat{t}$ such that the rectangle defined by the opposite corners $(t.x, t.y)\times (\hat{t}.x, t'.y)$ contains exactly $\delta m$ points. Similarly, we find the point $\bar{t}$ at the left side of $t.x$ such that the rectangle defined by the opposite corners $(\bar{t}.x, t.y)\times (t.x, t'.y)$ contains exactly $\delta m$ points. If there is not such a rectangle that contain $\delta m$ points (this can happen if there are not enough items in $b_i$ to get such a rectangle) then we proceed as follows: Without loss of generality assume that there is not such rectangle at the right side of $t.x$. Then let $\tilde{t}$ be the item with the largest $x$ coordinate. By running binary search we find the item $\hat{t}$ at the left side of $t.x$ such that the rectangle defined by the opposite corners $(\hat{t}.x, t.y)\times(\tilde{t}.x,t'.y)$ contains $\delta m$ items. If such rectangle does not exist then let $\hat{t}$ be the item with the smallest $x$ coordinate. By running binary search over the $y$ coordinates we find the item $p$ such that the rectangle defined by the opposite corners $(\hat{t}.x,t.y)\times(\tilde{t}.x, p.y)$ has $\delta m$ items. If such rectangle does not exist we skip $t$ and we continue with the next item.
For each item $t$ we also repeat the symmetric process above $t.y$, .i.e, considering the $w=2^i$-th item above $t.y$ for integer $i=0,\ldots, \log m$.
Let $\mathcal{C}$ be the set of all rectangles with size $\delta m$ that our algorithm considers.
Among all rectangles we process, we store and update the rectangle $q$ with the largest value $\sum_{h\in q}t_h^2$.
In the end, among $\mathcal{C}$, we return the rectangle (query) $q'$ with the largest value $\sum_{h\in q'}t_h^2$.

It takes $\mathcal{O}(m\log m)$ to construct the range tree. Then given a partition $b_i$, we visit each point and for each $w$ we run a binary search with respect to the $x$ coordinate. For each step of the binary search we run a query on the range tree. The range query takes $\mathcal{O}(\log m)$ and we execute $\mathcal{O}(\log m)$ steps of the binary search. There are $\mathcal{O}(\log m)$ possible values of $w$ so in total our algorithm runs in $\mathcal{O}(m\log^3 m)$ time. It also follows that $|\mathcal{C}|=O(m\log ^2 m)$.

The correctness follows from the fact that for every possible rectangle $r$ in $2$ dimensions that contains at least $\delta m$ samples and most $2\delta m$ samples (for example $q^*$) there exist at most four
(possibly overlapping) rectangles in $\mathcal{C}$ that completely cover $r$. Hence from the proof of Lemma~\ref{lem:AVG} we get that 
$V_i(q')\geq \frac{1}{8}V_i(q^*)$.

\paragraph{Second algorithm}
Next, we show a simpler algorithm that works in any constant dimension $d$.
For a partition $b_i$ we construct a modified k-d tree so that each leaf contains $\delta m$ items.
We can do it by start constructing a balanced k-d tree. If we find a node that contains exactly $\delta m$ items we stop the recursion in this path. If a node contains less than $2\delta m$ and more than $\delta m$ items we create two leaf nodes where each of them has exactly $\delta m$ items (a sample might lie in two leaves). Let $\mathcal{U}$ be the leaf nodes of the tree. Notice that $|\mathcal{U}|=O(\frac{1}{\delta})$. For each leaf node $u$ we compute $s_u=\sum_{h\in u}t_h^2$, i.e., the sum of squares of values of items in $u$. We return the query $q'$ that corresponds to the leaf $\arg\max_{u\in \mathcal{U}}s_u$.



Given a partition $b_i$ we can return $q'$ in $\mathcal{O}(m\log m)$ time (similar to constructing a k-d tree).
From the k-d tree construction we know that any rectangle can intersect at most $\mathcal{O}(\frac{1}{\delta^{1-1/d}})$ canonical nodes so the optimum query $q^*$ intersects at most $\mathcal{O}(\frac{1}{\delta^{1-1/d}})$ nodes. 
From the proof of Lemma~\ref{lem:AVG} we get that $V_i(q')\geq \frac{\delta^{1-1/d}}{2}V_i(q^*)$.

If we spend $\mathcal{O}(m\log^{d-1} m)$ pre-processing time to construct a range tree we can decide how to split a node of the k-d tree in $\mathcal{O}(\log^{d} m)$ time. In that case, after the pre-processing phase, given $b_i$ we can find $q'$ in $\mathcal{O}(\frac{1}{\delta^{1-1/d}}\log^d m)$ time.

\subsection{Approximating the overall error}
\label{appndx:Error}
In the previous section, we proposed efficient algorithms for computing an approximation of the maximum variance query in a partition $b_i$.
In this section we show what is the overall approximation error we get from the dynamic programming algorithm (for $d=1$) or the k-d tree construction (for $d>1$).
Throughout this section we assume that given a partition $b_i$ with maximum variance $V_i^*$, we can get $V_i\geq \alpha V_i^*$, for a parameter $\alpha<1$, in $\mathcal{O}(\mathcal{H})$ time (after near-linear pre-processing time).
For simplicity we assume that the ratio $\frac{N_i^2}{n_i^2}$ for any possible partition is the same. In the end we describe how the approximation is affected when the ratio is bounded as in Section~\ref{sec:ratio}.

\mparagraph{Approximation for $d=1$}

From the proof of Lemma~\ref{lem:COUNT1} we can argue that the optimum partitioning for COUNT queries in 1D consists of partitions of the same number of samples, i.e, $n_i=\frac{m}{k}$ for each of the $k$ partitions. Hence the dynamic programming algorithm and the approximation of the maximum variance procedure are not needed for COUNT queries in 1D. The optimum partitioning can be found in $\mathcal{O}(m\log m)$ time.


Next, we focus on SUM and AVG queries. We consider the faster dynamic programming algorithm that uses a binary search to find the next partition.
In particular for $A[i,j]$ we run a binary search in the interval $[1,i]$. For each value $x\in [1,i]$ we compare $A[x-1,j-1]$, $V([x,i])$, where $V([x,i])$ is the $\alpha$ approximation of the maximum variance in the interval (partition) $[x,i]$. If $A[x-1,j-1]=V([x,i])$ then we construct the partition $[x,i]$ and we set $A[i,j]=V([x,i])$. If $A[x-1,j-1]>V([x,i])$ we continue the binary search in $[1,x-1]$. If $A[x-1,j-1]<V([x,i])$ we continue the binary search in $[x+1,i]$. In the end, if $A[x-1,j-1]<V([x,i])$ and $A[x,j-1]>V([x+1,i])$ then we set $A[i,j]=\min\{V([x,i]), A[x,j-1]\}$.

\begin{lemma}
For $d=1$, the dynamic programming algorithm returns a partitioning $R$ such that $V(R)\leq \frac{1}{\alpha}V(R^*)$,
where $R^*$ is the optimum partitioning. The algorithm runs in $\mathcal{O}(km\mathcal{H}\log m)$ time.
\end{lemma}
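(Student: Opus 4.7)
The plan is to establish the $1/\alpha$ approximation by a standard ``compose the errors'' argument on the DP table, and then to audit the binary-search implementation of the recursion to confirm both its correctness and the stated runtime.

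First, I would introduce two variance functionals on partitionings: the true maximum-variance functional $V(R)=\max_{b_i\in R}V_i^*$ and the approximate one $\widehat{V}(R)=\max_{b_i\in R}V_i$ that the DP actually uses via the $\alpha$-approximation oracle. The hypothesis $\alpha V_i^* \le V_i \le V_i^*$ gives the pointwise sandwich $\alpha V(R)\le \widehat{V}(R)\le V(R)$ for every partition $R$.

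Next, I would argue by induction on $j$ that the table entry $A[i,j]$ equals $\min_{R}\widehat{V}(R)$, where the minimum ranges over all valid partitionings of the prefix of the first $i$ items into $j$ buckets. The base cases $j=1$ and $i=1$ follow from the definitions of $V$ and $\widehat{V}$. For the inductive step, writing $A[i,j]=\min_{h<i}\max\{A[h,j-1],V([h+1,i])\}$, the claimed identity is preserved because the outer minimum over $h$ enumerates every choice of the rightmost split and the inductive hypothesis handles the prefix. Granted this, the returned partition $R$ satisfies $\widehat{V}(R)=A[N,k]\le \widehat{V}(R^*)\le V(R^*)$, and applying the sandwich once more gives $V(R)\le \widehat{V}(R)/\alpha \le V(R^*)/\alpha$, which is the claimed approximation ratio.

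The main delicate point, which I would address third, is that the DP is implemented via binary search rather than by a linear scan over $h$. The standard argument, already used for the exact $\mathcal{M}$ earlier in Section~\ref{sec:dp_algo}, is that $A[h,j-1]$ is monotone nondecreasing in $h$ (since any $(j-1)$-partition of the first $h_2$ items restricts to a valid $(j-1)$-partition of the first $h_1\le h_2$ items) and that $V([h+1,i])$ is monotone nonincreasing in $h$ (enlarging an interval cannot decrease the maximum-variance sub-query it contains). Together these imply that $\max\{A[h,j-1],V([h+1,i])\}$ is unimodal in $h$, so a binary search for the crossover index returns the global minimum. The hard part will be verifying that the specific oracles of Appendices~\ref{appndx:SUM} and~\ref{appndx:AVG} inherit (perhaps only up to a constant that can be absorbed into $\alpha$) this second monotonicity — the median-split and range-tree constructions there are approximate, and I would need to check that shrinking the search interval $[h+1,i]$ cannot create new ``spikes'' in the approximate variance that fool the binary search.

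Finally, for the runtime analysis: there are $\mathcal{O}(km)$ cells in $A$, each filled with a binary search of depth $\mathcal{O}(\log m)$, and every probe in the binary search invokes the variance oracle at cost $\mathcal{O}(\mathcal{H})$. Multiplying gives the stated $\mathcal{O}(km\mathcal{H}\log m)$ bound; the initial sort of the $m$ samples and the $\mathcal{O}(m)$-space prefix-sum or auxiliary-tree preprocessing are near-linear and therefore subsumed in the total.
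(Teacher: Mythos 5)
Your skeleton (sandwich the approximate oracle between $\alpha V_i^*$ and $V_i^*$, induct over the DP table, then audit the binary search and multiply out the runtime) matches the paper's proof in outline, and your runtime accounting is exactly the paper's: $\mathcal{O}(km)$ cells, $\mathcal{O}(\log m)$ probes per cell, $\mathcal{O}(\mathcal{H})$ per probe. But the step you flag and then defer is a genuine gap, and it sits at the heart of the lemma. Your inductive claim that $A[i,j]=\min_{R}\widehat{V}(R)$ is in general \emph{false} for the binary-search implementation: the probed quantity $V([h+1,i])$ is only an $\alpha$-approximation of the true maximum variance, and the approximation oracles of Appendices~\ref{appndx:SUM} and~\ref{appndx:AVG} are \emph{not} monotone in $h$ --- the SUM oracle, for instance, evaluates the two halves of a median split, and as $h$ moves the split point jumps discontinuously, so shrinking $[h+1,i]$ can raise the reported value. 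Consequently $\max\{A[h,j-1],V([h+1,i])\}$ need not be unimodal, the binary search need not return the minimizer of the approximate functional, and the middle link of your composed chain ($\widehat{V}(R)=A[m,k]\le\widehat{V}(R^*)$) has no support. Your proposed repair --- proving that the oracles inherit monotonicity ``up to a constant that can be absorbed into $\alpha$'' --- is not carried out, and the paper does not attempt it either.

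The paper instead proves a weaker invariant directly, namely $\alpha V[i,j]\le A[i,j]\le V^*[i,j]$, where $V[i,j]$ is the \emph{true} maximum variance of the partitioning the algorithm actually builds and $V^*[i,j]$ is the true optimum over the prefix; this is strictly weaker than your exact-minimization claim but suffices for the lemma. The binary search is rescued not by monotonicity of the approximate values but by monotonicity of the \emph{exact} ones: the paper defines $h^-$ as the smallest index with $\alpha V^*([h^-,i])\le V^*[h^--1,j-1]$ and $h^+$ as the largest with $\alpha V^*[h^+-1,j-1]\le V^*([h^+,i])$, and shows every probe at $g<h^-$ satisfies $V([g,i])\ge\alpha V^*([g,i])>V^*[g-1,j-1]\ge A[g-1,j-1]$ (symmetrically for $g>h^+$), so the search is forced into the window $[h^--1,h^+]$. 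A case analysis on the two stopping conditions ($A[g,j-1]\le V([g,i])$ versus the reverse) then shows that \emph{any} landing point in this window yields the sandwich, even though the chosen split may be strictly suboptimal for $\widehat{V}$. To make your proof go through you would need to replace ``the DP exactly minimizes $\widehat{V}$'' with this bracketing argument (or else actually establish approximate unimodality for each oracle, which neither you nor the paper does).
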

\begin{proof}
Let $V^*[i,j]$ be the variance of the optimum partitioning of the first $i$ items with $j$ partitions.
Let $V[i,j]$ be the variance of the partitioning found by our dynamic programming algorithm among the first $i$ items with $j$ partitions.
Let $V([l,r])$ be the maximum variance returned by our approximation algorithm in the interval $[l,r]$ and let $V^*([l,r])$ be the real maximum variance. It holds that $V([l,r])\geq \alpha V^*([l,r])$.

We show that $V[i,j]\leq \frac{1}{\alpha}V^*[i,j]$. In particular we show that $\alpha V[i,j]\leq A[i,j]\leq V^*[i,j]$.
We show the result by induction on the number of partitions and the number of objects. For the base case consider $A[i,1]$ for any $i\leq m$. By definition we have that $V[i,1]=V^*[i,1]$ and $A[i,1]\geq \alpha V[i,1]$.

Assume that for all $h\leq i$ it holds that $\alpha V[h,j-1]\leq A[h,j-1]\leq V^*[h,j-1]$. We show that $\alpha V[i,j]\leq A[i,j]\leq V^*[i,j]$.

The dynamic programming algorithm runs a binary search to find the left endpoint of the $j$-th partition. Let $[h,i]$ be the $j$-th partition of the optimum partitioning among the first $i$ items with $j$ partitions. Ideally we want the binary search to find $h$, however due to the approximation of the maximum variance this is not always possible. 
Without loss of generality assume that $V^*[h-1,j-1]\geq V^*([h,i])$ so $V^*[i,j]=V^*[h-1,j-1]$. Equivalently, the proof holds if we assume that $V^*[i,j]=V^*([h,i])$.
Notice that $V^*[g,j-1]\leq V^*[h-1,j-1]$ for $g\leq h-1$ and $V^*([g,i])\geq V^*([h,i])$ for $g\leq h$. Similarly, $V^*[g,j-1]\geq V^*[h-1,j-1]$ for $g>h-1$ and $V^*([g,i])\leq V^*([h,i])$ for $g>h$. 
Let $h^-$ be the smallest index such that $\alpha V^*([h^-,i])\leq V^*[h^--1,j-1]$. Symmetrically we define $h^+$ as the largest index such that $\alpha V^*[h^+-1,j-1]\leq V^*([h^+,i])$.
We note that the binary search will always return an index $g$ such that $h^--1\leq g\leq h^+$. This holds because $V([g,i])\geq \alpha V^*([g,i])>V^*[g-1,j-1]\geq A[g-1,j-1]$ for all $g<h^-$. Equivalently we argue for $g>h^+$.

From the discussion above we have that the binary search will find an index $g\in[h^--1, h^+]$ such that $V([g,i])\geq A[g-1,j-1]$ and $A[g,j-1]\geq V([g+1,i])$. We study two cases.

If $A[g,j-1]\leq V([g,i])$ then the binary search decides that $g+1$ is the left endpoint of the $j$-th partition among the first $i$ items and sets $A[i,j]=A[g,j-1]$. From the hypothesis we have $A[g,j-1]\geq \alpha V[g,j-1]$ and $A[g,j-1]\geq V([g+1,i])\geq \alpha V^*([g+1,i])$. Hence, $A[i,j]=A[g,j-1]\geq \alpha \max\{V[g,j-1], V^*([g+1,i])\}=\alpha V[i,j]$.
If $g\in [h,h^+)$ then $V([g,i])\leq V^*[i,j]$ so $A[g,j-1]\leq V([g,i])\leq V^*[i,j]$. It is easy to see that it is not possible that $g=h^+$. If $g\in [h^--1, h-1]$ then $A[g,j-1]\leq V^*[i,j]$, by definition. So in any case $\alpha V[i,j]\leq A[i,j]\leq V^*[i,j]$. 

If $V([g,i])\leq A[g,j-1]$ then the binary search decides that $g$ is the left endpoint of the $j$-th partition among the first $i$ items and sets $A[i,j]=V([g,i])$. From the hypothesis we have $V([g,i])\geq \alpha V^*([g,i])$ and $V([g,i])\geq A[g-1,j-1]\geq \alpha V[g-1,j-1]$. Hence, $A[i,j]=V([g,i])\geq \alpha \max\{V^*([g,i]), V[g-1,j-1]\}=\alpha V[i,j]$.
If $g\in [h,h^+]$ then $V([g,i])\leq V^*[i,j]$, by definition.
If $g\in[h^-,h-1]$ then $A[g,j-1]\leq V^*[i,j]$ so $V([g,i])\leq A[g,j-1]\leq V^*[i,j]$.
It is easy to see that it is not possible that $g=h^--1$. So in any case $\alpha V[i,j]\leq A[i,j]\leq V^*[i,j]$. 
There are a couple of more corner cases, for example what if $h^-$ or $h^+$ is not defined (or $h^-=h$, $h^+=h$), however these are special cases and can be handled with the same ideas.

In the end we have $\alpha V[m,k]\leq A[m,k]\leq V^*[m,k]\Leftrightarrow V[m,k]\leq \frac{1}{\alpha}V^*[m,k]$. The lemma follows. 
\end{proof}
The proof for the slower dynamic programming (quadratic on $m$) follows easily using the same ideas.

From Lemmas~\ref{lem:Obs}, \ref{proof1}, \ref{lem:AVG} we get the following results for $d=1$. For COUNT queries the partition with the minimum error can be found in $\mathcal{O}(m\log m)$ time. For SUM queries we can get a $2\sqrt{2}$-approximation of the optimum partition in $\mathcal{O}(km\log m)$ time. For AVG queries we can get a $2$-approximation of the optimum partition in $\mathcal{O}(km\log^2 m)$ time.

We note that the results for SUM and COUNT queries hold assuming that the ratio $\frac{N_i}{n_i}$ is the same for all possible valid buckets. If this is not the case and the ratios of two different buckets differ (at most) by a multiplicative parameter $\beta$ then the approximation factor of the partition we found needs to be multiplied by $\sqrt{\beta}$.
By constructing large enough partitions as shown in Section~\ref{sec:ratio} we guarantee that $\beta\leq 9$ with high probability, so all our approximation factors from the previous paragraph for COUNT and SUM queries should be multiplied by $\sqrt{9}=3$.

\mparagraph{Approximation for $d>1$}
We run the construction algorithm of Section~\ref{sec:kd-tree} considering a uniform set of $m$ samples.
Let $T'$ be the tree that our algorithm returns and let $T^*$ be the optimum k-d tree. For a node $u$ of a k-d tree let $V(u)$ be the variance of the query with the maximum variance in $u$. For a node $u$ let $V'(u)$ be the variance that our approximation algorithm reports as the maximum variance among the samples in $u$, satisfying $V'(u)\geq \alpha V(u)$.

\begin{lemma}
It holds that $V(T')\leq \frac{1}{\alpha}V(T^*)$. The algorithm for constructing $T'$ runs in $\mathcal{O}(m\log m+k\mathcal{H})$. time.
\end{lemma}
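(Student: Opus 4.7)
The plan is to argue by an exchange-style comparison between the greedy tree $T'$ and any competing tree $T^*$ (both viewed as subtrees of the fixed balanced k-d tree $U$ rooted at the root), using the monotonicity property from Section~\ref{sec:dp_algo} that $b_x \subseteq b_y$ implies $V_x(q) \leq V_y(q)$ for any query $q$ fully inside $b_x$, and hence $V(b_x) \leq V(b_y)$. Let $\hat{u}$ be a leaf of $T'$ that attains $V(T') = V(\hat{u})$. I would case-split on where $\hat{u}$ sits inside $T^*$: if $\hat{u}$ is contained in some leaf $w$ of $T^*$, monotonicity immediately gives $V(\hat{u}) \leq V(w) \leq V(T^*)$, without even needing the $1/\alpha$ factor. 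The interesting case is when $\hat{u}$ is a proper ancestor of one or more leaves of $T^*$, which means $T^*$ must have expanded $\hat{u}$ and every ancestor of $\hat{u}$.

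In this case, letting $E'$ and $E^*$ denote the sets of internal (expanded) nodes, we have $\hat{u} \in E^* \setminus E'$; a counting argument based on the fact that both trees have at most $k$ leaves and the fanout is fixed (so $|E|$ is determined by the number of leaves) yields $|E' \setminus E^*| \geq |E^* \setminus E'| \geq 1$, so there exists a witness $v \in E' \setminus E^*$. The key sub-step is to relate $V(v)$ to $V(\hat{u})$ via the greedy's choice. I would first rule out that $v$ is an ancestor of $\hat{u}$: if it were, then since $T^*$ expanded $\hat{u}$ it would also expand $v$, contradicting $v \notin E^*$. Hence $v$ lies on a different branch from $\hat{u}$, and at the moment the greedy expanded $v$, either $\hat{u}$ itself or some ancestor $\hat{u}_a$ of $\hat{u}$ was a current leaf of the partial tree. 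Since the greedy always expands the leaf with maximum $V'$, we have $V'(v) \geq V'(\hat{u}_a)$.

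Chaining the inequalities $V(v) \geq V'(v) \geq V'(\hat{u}_a) \geq \alpha V(\hat{u}_a) \geq \alpha V(\hat{u})$ (the last step again by monotonicity of the true $V$) gives $V(\hat{u}) \leq V(v)/\alpha$. Since $v \notin E^*$, $v$ is contained in (or equal to) some leaf of $T^*$, so $V(v) \leq V(T^*)$, and combining yields $V(T') = V(\hat{u}) \leq V(T^*)/\alpha$. For the running time, constructing the balanced k-d tree $U$ takes $\mathcal{O}(m \log m)$; then the greedy performs $\mathcal{O}(k)$ expansions, and using a max-heap keyed by $V'$ together with the $\mathcal{O}(\mathcal{H})$ cost per child of computing $\MAXV$, the total is $\mathcal{O}(m \log m + k \mathcal{H})$. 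The main obstacle I anticipate is carefully pinning down the right current leaf $\hat{u}_a$ at the greedy step where $v$ is expanded and ensuring the inequality chain that mixes the $\alpha$-approximation bound with variance monotonicity is airtight; the counting step producing $v \in E' \setminus E^*$ (with fanout $c$ making $|E| = (k-1)/(c-1)$ for any subtree with the full number of leaves) also requires a small but careful bookkeeping.
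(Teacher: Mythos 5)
Your proof is correct and is essentially the paper's argument in dual form: the paper anchors the exchange at a leaf $u$ of $T^*$ that the greedy expanded, comparing it at that iteration against the current leaf $w$ holding the true maximum variance and chaining $V(T')\leq V(w)\leq \frac{1}{\alpha}V'(w)\leq \frac{1}{\alpha}V'(u)\leq \frac{1}{\alpha}V(u)\leq \frac{1}{\alpha}V(T^*)$, whereas you anchor at the max-variance leaf $\hat{u}$ of $T'$ and a witness $v\in E'\setminus E^*$, but both arguments rest on exactly the same four ingredients (containment monotonicity of $V$, the sandwich $\alpha V\leq V'\leq V$, the greedy max-$V'$ choice at one iteration, and the existence of a node expanded by one tree but not the other, which the paper also obtains implicitly from the node-count comparison you make explicit). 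Your runtime accounting --- $\mathcal{O}(m\log m)$ to build the k-d tree over the samples, then $\mathcal{O}(k)$ expansions each costing $\mathcal{O}(\mathcal{H})$ with a max-heap over the approximate variances --- likewise matches the paper's.
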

\begin{proof}
We note that if $u$ is an ancestor of a node $v$ of a k-d tree then $V(u)\geq V(v)$ for all types of queries COUNT, SUM, AVG. The proof of this argument is the same with the proof for 1D where the maximum variance of a partition $b_i$ is larger than the maximum variance of a partition $b_j$ where $b_j\subseteq b_i$. Hence, the overall maximum variance of $T'$ is non-increasing as we run more iterations constructing the tree.

If $T'$ is identical to $T^*$ then $T'$ is optimum. Next we consider the case where $T'$ is not identical to $T^*$.
In this case, there is always a leaf node $u$ of $T^*$, where $u$ belongs in $T'$ and $u$ is not a leaf node of $T'$. In other words, $u$ is a leaf node in $T^*$, however in one of the iterations of our algorithm we found that $u$ had the query with the largest (approximated) variance and we constructed its children.
Since $u$ is a leaf node of $T^*$ we have $V(T^*)\geq V(u)$.
At the moment that our algorithm decided to create the children of $u$ let $w$ be the leaf of $T'$ that contains the query with the real maximum variance. We have i) $V(u)\geq V'(u)\geq \alpha V(u)$, $V(w)\geq V'(w)\geq \alpha V(w)$ from the approximation algorithm for computing the maximum variance in a partition, ii) $V'(u)\geq V'(w)$ because the algorithm decided to split the node $u$ instead of node $w$, and iii) $V(u)\leq V(w)$ from the definition. It follows that $$V(T')\leq V(w)\leq \frac{1}{\alpha}V'(w)\leq \frac{1}{\alpha}V'(u)\leq \frac{1}{\alpha}V(u)\leq \frac{1}{\alpha}V(T^*).$$

A k-d tree over $m$ items can be constructed in $\mathcal{O}(m\log m)$ time. In each iteration we need to find an approximation of the maximum variance of $\mathcal{O}(1)$ new leaf nodes which can be done in $\mathcal{O}(k\mathcal{H})$ time. Furthermore, we need to store and update a Max-Heap of size $\mathcal{O}(k)$ so that we can find the next leaf node with the approximated maximum variance in constant time. This can be done in $\mathcal{O}(k\log k)$ time. Overall, our algorithm runs in $\mathcal{O}(m\log m + k(\mathcal{H}+\log k))=O(m\log m + k\mathcal{H})$ time, since $m>k$.
\end{proof}

From Lemmas~\ref{lem:Obs}, \ref{proof1} and Section~\ref{appndx:AVG} we get the following results for any constant $d>1$. For SUM and COUNT queries we get a $2\sqrt{k}$-approximation of the optimum k-d tree in $\mathcal{O}(m\log m)$ time. We notice that we do not use the running time from Section~\ref{appndx:SUM} since by constructing a balanced k-d tree over all sampled items we can find all the necessary sums in $\mathcal{O}(m\log m)$ time without constructing a range tree. For AVG queries in $2$-dimensions we can get a $2\sqrt{2k}$-approximation of the optimum k-d tree in $\mathcal{O}(km\log^3 m)$ time. For any dimension $d$ for AVG queries we can get a $\frac{\sqrt{2k}}{\delta^{1/2-1/(2d)}}$-approximation of the optimum k-d tree in $\mathcal{O}(km\log m)$ time. Using the range tree construction as we described in the end of Section~\ref{appndx:AVG} we can achieve the same approximation factor in $O(m\log^{d-1}m + \frac{k}{\delta^{1-1/d}}\log^{d} m)$ time for AVG queries.

\vspace{0.4em}

As we had in 1D, we note that the results for COUNT and SUM queries hold assuming that the ratio $\frac{N_i}{n_i}$ is the same for all possible valid buckets. If this is not the case then the approximation factor of the partition we found needs to be multiplied by $\sqrt{\beta}$.
By constructing large enough partitions as shown in Section~\ref{sec:ratio} we guarantee that $\beta\leq 9$ with high probability, so all our approximation factors for COUNT and SUM queries from the previous paragraph should be multiplied by $\sqrt{9}=3$.
 
\end{document}